\newcommand\CO[2]{\raisebox{-.5\baselineskip}{$\overset{\textstyle #1\mathstrut}{#2}$}}
\newcommand\ROLO[2]{#1\{#2\}}
\newcommand{\p}{{\bf p}}
\newtheorem{theorem}{Theorem}[section]
\newtheorem{prop}[theorem]{Proposition}
\newtheorem{cor}[theorem]{Corollary}
\newtheorem*{schur}{Schur's Lemma}
\theoremstyle{definition}
\newtheorem{definition}[theorem]{Definition}
\newtheorem{example}[theorem]{Example}
\theoremstyle{remark}
\newtheorem{remark}[theorem]{Remark}
\newtheorem{proposal}[theorem]{Proposal}
\newtheorem{scenario}[theorem]{Scenario}
\newtheorem{notation}[theorem]{Notation}
\numberwithin{equation}{section}
\begin{document}

\title[Voting on Cyclic Orders]{Voting on Cyclic Orders, Group Theory, and Ballots}


\author[Crisman]{Karl-Dieter Crisman}
\address{Department of Mathematics and Computer Science, Gordon College}
\curraddr{}
\email{karl.crisman@gordon.edu}
\thanks{This work was partly supported by a Gordon College Provost's Summer Undergraduate Research Fellowship.}

\author[Holleran]{Abraham Holleran}
\address{Gordon College}

\author[Martin]{Micah Martin}
\address{Gordon College}

\author[Noonan]{Josephine Noonan}
\address{Gordon College}

\subjclass[2000]{Primary 91B12; Secondary 91B14,20C05}

\date{}

\keywords{Voting theory, cyclic orders, representation theory, social choice}

\begin{abstract}
A \emph{cyclic order} may be thought of informally as a way to seat people around a table, perhaps for a game of chance or for dinner.  Given a set of agents such as $\{A,B,C\}$, we can formalize this by defining a cyclic order as a permutation or linear order on this finite set, under the equivalence relation where $A\succ B\succ C$ is identified with both $B\succ C\succ A$ and $C\succ A\succ B$.  As with other collections of sets with some structure, we might want to aggregate preferences of a (possibly different) set of voters on the set of possible ways to choose a cyclic order.  

However, given the combinatorial explosion of the number of full rankings of cyclic orders, one may not wish to use the usual voting machinery.  This raises the question of what sort of ballots may be appropriate; a single cyclic order, a set of them, or some other ballot type?  Further, there is a natural action of the group of permutations on the set of agents.  A reasonable requirement for a choice procedure would be to respect this symmetry (the equivalent of neutrality in normal voting theory).

In this paper we will exploit the representation theory of the symmetric group to analyze several natural types of ballots for voting on cyclic orders, and points-based procedures using such ballots.  We provide a full characterization of such procedures for two quite different ballot types for $n=4$, along with the most important observations for $n=5$.
\end{abstract}

\maketitle

\section{Introduction}

\subsection{New outputs and new ballots}\label{subsection:newthings}

Two dominant paradigms for voting theory (going back before \cite{ArrowDifficulty} to names like Borda, Condorcet, Cusanus, and Llull) have long been the search for either a single winner or a full ranking of candidates.  Although these are evidently very useful models, even the fact that one needs definitions of terms like `resolute' and `social choice function' (both of which have to do with the inevitable occurrence of multiple winning outcomes in a choice procedure) indicates that other paradigms might be useful.

One very relevant area that has attracted a fair amount of recent attention is voting for committees -- to wit, fixed-size multiwinner elections.  Many people live in polities which allow for outcomes (and/or ballots) which are fixed-size subsets of the set of candidates, from school boards to national representatives.  Yet, as the recent survey \cite{FaliszewskiEtAlNewChallenge} of the issue in computational social choice declares, despite some initial work, this paradigm is still a relatively `new challenge' for the community.

Similarly, there has been an increasing amount of discussion about different types of ballots.  The most well-known alternate ballot is the simplest, coming from approval voting (see works from \cite{ApprovalVoting} to \cite{ApprovalHandbook}).  An approval ballot is one where an agent simply selects a subset of all possible candidates as `approved'.  This has been recognized as a ballot and not just a voting system, as indicated by its use in selecting committees in \cite{SatisfactionApproval}.  Other variant (but intuitive) ballot types include having three options (\cite{DisAndApproval}) as well as ballots from systems such as range voting or cumulative voting.  

Returning to committees, in \cite{RatliffPublicChoice} the case is largely made \emph{against} submitting a full ranking of all outcomes, whereupon \cite{RatliffSaariComplexities} instead restricts the ballot to picking one of a restricted subset of allowable committees.  In a somewhat different context, \cite{DavisOrrisonSu} asks for ballots which are \emph{subsets} of a committee.  Which ballot is best?  Suffice to say\footnote{More exotic ballot types include truncated rankings (such as in voting for several college football rankings in the United States) 
and even more interesting options, such as a partially ordered set of the candidates in an election (\cite{ackerman-etal,CullinanEtAl,PiniEtAl,FaginEtAlPartial}).   
In the more general aggregation problems that come up in data analysis (in everything from genetics to search engines), one could even have just binary relations as both ballot and output, as in the classic paper \cite{BarthelemyMonjardetMedian}.} that the door is just opening when it comes to what to do in voting \emph{and} balloting on less typical combinatorial objects appearing in voting theory.

\subsection{Some new scenarios}

This paper addresses both questions -- that of a new output type and that of which ballots to use -- for \emph{cyclic orders}.    Consider the following possible scenarios.

\begin{scenario}\label{scen:table}
Your large family is gathering for some important holiday (such as Thanksgiving, in the United States).  Everyone needs to be seated around a gigantic round table, but there are many things to keep in mind!  Here is a list of just some of \emph{your} priorities.
\begin{itemize}
\item Your leftist uncle must be kept far away from your rightist aunt.
\item Your uncle has two young children which should be seated very close to him -- ideally, adjacent.
\item Your grandmother is deaf in her right ear, so if she is seated next to an adult she likes to chat with, she must have that person to her \emph{left}.
\end{itemize}
And everyone else has their own priorities as well.  How do you come up with a seating everyone can live with, but without too much fuss?
\end{scenario}

\begin{scenario}\label{scen:courses}
A large faculty needs to vote on a rotation for topical course offerings that are given on a two- or three-year rotation.  Here are some sample (partial) preferences of faculty in the department.
\begin{itemize}
\item The $p$-adic Hodge theory course should be offered immediately after the basis $p$-adic analysis course, since they can build on each other.
\item Those courses should be offered far apart, so that there is always a $p$-adic course available.
\item The Computational Social Choice course should only be offered the semester immediately \emph{before} the Cooperative Game Theory course, so that most students in Cooperative Game Theory will be able to do advanced projects on $NP$-completeness.
\item The only person qualified to teach Computational Social Choice is also the same one as must teach Cooperative Game Theory, and she does not want to prepare both of them in the same academic or calendar year.
\end{itemize}
Even if your whole department votes on a rotation, what method should you choose to do so?
\end{scenario}

\begin{scenario}\label{scen:poker}
Poker is a gambling game of both luck and skill.  A long poker game can consist of many rounds, where the starting player shifts around a table after each round, so that in the long run everyone bids first about the same number of times.   In addition, it is well-known\footnote{See for example, \href{https://www.pokernews.com/strategy/10-hold-em-tips-08-25417.htm}{https://www.pokernews.com/strategy/10-hold-em-tips-08-25417.htm}.} that it is advantageous to be in bidding order \emph{after} a strong player, and to be placed immediately \emph{before} a weak player.  

Suppose that for a particular live-streaming poker championship, the audience of thousands is allowed to place bets\footnote{Although we do not endorse gambling as such in this paper, such meta-gambling certainly occurs.  One can place bets on the winner of the World Series of Poker final round courtesy of the owner of the tournament, Caesar's.} as well on who will win -- and is allowed to \emph{vote} on what order the players sit in!  Clearly an advantageous seating arrangement for your chosen winner will help your odds of winning.  How should the organizers of this tournament create ballots, and how should they tally them?
\end{scenario}

It should be clear that such scenarios could be multiplied indefinitely.  In each of them, the common factor is that there is some set of agents with preferences on some non-linear, `cyclic'  domain.  However, the most important aspects to agents seem to differ -- is it proximity, order, or some combination?  As above, submitting an entire ranking of all possible such orderings could be unwieldy, while simply using plurality on the set of all the orderings is too likely to give a massive tie.

\subsection{Cyclic Orders}

The intent of this paper is to examine voting on these cyclic orders, as well as to begin examining some alternate ballots of interest.  First, let us give a proper definition.

\begin{definition}\label{defn:cyclicorder}
Let $\mathcal{A}=\{A_1,A_2,\ldots ,A_n\}$ be a finite set and $S_n$ be the symmetric group.  Consider the set of permutations (or complete linear orders) $\mathcal{L}(\mathcal{A})$ of $\mathcal{A}$, and let $\sigma=(1234\cdots n)\in S_n$ be a natural $n$-cycle.  Then construct the equivalence relation where 
$$A_{i_1}\succ A_{i_2}\succ \cdots \succ A_{i_n}  \sim A_{i_{\sigma(1)}}\succ A_{i_{\sigma(2)}}\succ \cdots \succ A_{i_{\sigma(n)}} $$ 
for any element of $\mathcal{L}(A)$ (and hence also $A_{i_{\sigma^k(1)}}\succ A_{i_{\sigma^k(2)}}\succ \cdots \succ A_{i_{\sigma^k(n)}}$ for any $k$).  A (total) \emph{cyclic order} (sometimes \emph{cyclic ordering}) on $\mathcal{A}$ is an equivalence class of $\mathcal{L}(A)$ under this equivalence relation.  

We will occasionally use the notation $CO_n$ for this set for a given $n$, regardless of $\mathcal{A}$.  There are necessarily $n!/n=(n-1)!$ cyclic orders for a set of cardinality $n$.  
\end{definition}

\begin{notation}
Most of the time, it will be easier to think of a cyclic order as a single permutation in cycle notation, beginning with a common element.  If $\mathcal{A}=\{1,2,3,4,5\}$, then we might write $(13524)$ to represent the cyclic order where $1\succ 3\succ 5\succ 2\succ 4\succ 1$, where it is implied that $(35241)$, $(52413)$, and other analogous cyclic reorderings are equivalent.  Sometimes, to emphasize the cyclic nature, we may instead write $(135241)$.  However, we will usually use a set of the form $\mathcal{A}=\{A,B,C,D,\ldots\}$ because this is the traditional one to use in a voting context, though to be clear our methods will not be voting on them as candidates!
\end{notation}

\begin{remark}\label{remark:symmetricaction}
The above action is a right action of $S_n$.  But there is also an important natural \emph{left} action on the set of cyclic orders by the symmetric group $S_n$.  Given $\sigma \in S_n$, the action takes
$$A_{i_1}\succ A_{i_2}\succ \cdots \succ A_{i_n} \succ A_{i_1} \stackrel{\sigma}{\longrightarrow} A_{\sigma(i_1)}\succ A_{\sigma(i_2)}\succ \cdots \succ A_{\sigma(i_n)}\succ A_{\sigma(i_1)}$$
which one may think of as permuting the \emph{elements} of $\mathcal{A}$, rather than their \emph{placement} as in Definition \ref{defn:cyclicorder}.  For example, $(13)(A_1A_2A_3A_4)=(A_3A_2A_1A_4)\sim(A_1A_4A_3A_2)$.  Again, for convenience we may instead write this action as $(AB)(ABCD)=(ACDB)$, with $(AB)\in S_n$ but $(ABCD),(ACDB)\in CO_4$, when there should be no danger of confusion.
\end{remark}

\begin{remark}\label{remark:cyclicternary}
One can also define a cyclic order in terms of \emph{ternary} relations with certain properties.  While cyclic orders have been studied in terms of set properties and certain graphs (\cite{WoodallCyclicOrder} is a notable paper along these lines), they are almost always treated in the literature either as relating to \emph{partial} cyclic orders, or as a means to some other end.   See Section \ref{section:further} for some additional references. 
\end{remark}

We will consider several types of ballots related to cyclic orders.  In addition to `plurality' type ballots which ask for just the voter's favorite cyclic order (which will be addressed in Sections \ref{section:cyclic4} and \ref{section:cyclic5}), we will also consider ballots which request certain additional pieces of information from voters.   For instance, in Scenario \ref{scen:table} we could ask for ballots which allow a voter to select someone to placed immediately to their right as well as a person as far away from them as possible, or in Scenario \ref{scen:poker} one could submit a ballot placing optimal players to either side of the player you are actually betting on.  We will formalize notions like this as TRAD and ROLO ballots in Section \ref{section:4ballots}.

The remainder of the paper will unfold as follows:
\begin{itemize} 
\item We will start in Section \ref{section:cyclic4} by investigating procedures for $n=4$ spots in the cyclic order, given that the ballot is a single order.
\item Then we will recall some representation theory in Section \ref{section:repthry}, which will allow us to be far more systematic in our approach in the remainder of the paper.
\item Next, in Section \ref{section:4ballots} we will introduce and analyze various methods for $n=4$ which allow for more flexibility in balloting, but do not require a full ranking of all $6$ possible cyclic orders.
\item We will then use the same techniques in Section \ref{section:cyclic5} to examine the case $n=5$, once again given that the ballot is a single cyclic order.
\end{itemize}

\section{Cyclic orders for $n=4$}\label{section:cyclic4}

It can easily be checked that there are only two possible cyclic orders with $n=3$, so we begin by examining the situation where $n=4$.  In this case there are exactly six possible cyclic orders.  In Figure \ref{figure:cyclic4} we will set a particular order of these outcomes for the purposes of the paper, and include two types of representations which are convenient in different situations.

\begin{figure}[H]
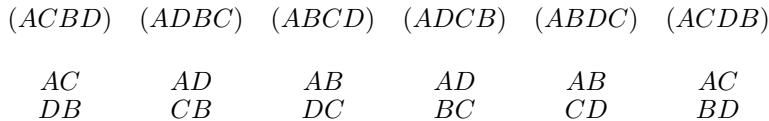

$$
\begin{array}{ccccccc}
(ACBD) & (ADBC) & (ABCD) & (ADCB) & (ABDC) & (ACDB)   \\
\\
\CO{AC}{DB} & \CO{AD}{CB} & \CO{AB}{DC} & \CO{AD}{BC} & \CO{AB}{CD} & \CO{AC}{BD}\\
\end{array}$$
\caption{Cyclic orders for $n=4$}
\label{figure:cyclic4}
\end{figure}

Note that the latter representations are meant to be read clockwise.  There are three pairs of orders, each of which is evidently the reversal of the other.

For convenience, in this section we may sometimes think of cyclic orders as seats at a table, as in Scenario \ref{scen:table}.  Think of some set of agents as voting on cyclic orders or seating arrangements, where in this section we allow each agent to select just one cyclic order (a `plurality' ballot, in some sense).  Then we formally extend the notion of a (necessarily anonymous) voting profile to this context.

\begin{definition}\label{defn:profile}
Given $n$ and an ordering of the $(n-1)!$ cyclic orders on a set of cardinality $n$, a \emph{profile} is a vector $\p\in \mathbb{Q}^{(n-1)!}$.  We write $\p[(ABCD)]$ to indicate the value of the profile for a particular cyclic order.  (If there are negative entries, especially if the sum of entries is zero, sometimes we call it a \emph{profile differential}.)
\end{definition}

\begin{remark}\label{remark:extend}
The $S_n$-action of Remark \ref{remark:symmetricaction} extends to profiles, via $\sigma(\p)[x]=\p[\sigma^{-1}x]$.  Writing elements of $S_4$ for convenience as $(AB)$, we see that if $\p=(2,1,1,0,0,0)^T$ then $(AB)\p = (1,2,0,0,0,1)^T$.
\end{remark}

Although there might be many ways to vote on cyclic orders, in this paper we confine ourselves to neutral points-based procedures\footnote{See \cite{ConitzerRognlieXia} and \cite{Zwicker} for inspiration and more general related notions.}, where neutrality should have the same connotation as it typically does in voting theory.

\begin{definition}\label{defn:points-based}
Suppose we have a \emph{ballot-scoring function}\footnote{Indeed, everything in this paper also works over the real numbers, but we restrict ourselves to the rationals for simplicity in certain situations.} $s:CO_n\times CO_n\to \mathbb{Q}$.  
Then a \emph{points-based voting rule} $f$, with ballot and outcome from the set of cyclic orders (for a given $n$), is a function from the set of all profiles on $CO_n$ to a nonempty subset of $CO_n$, such that $f(\p)$ is the (set of) cyclic order(s) $h$ which maximizes $\sum_{g\in CO_n} \p[g]s(g,h)$.  The rule $f$ is \emph{neutral} if $s$ is neutral, in the sense that for any $\sigma\in S_n$, $s(g,h)=s(\sigma(g),\sigma(h))$.
\end{definition}

Informally, the function $s$ represents how many points are allocated to the cyclic order $h$ for a single vote for the cyclic order $g$.  It is easier to envision this as applying a matrix $M$ (indexed by the elements of $CO_n$) with entries $M_{h,g}=s(g,h)$ to a profile $\p$, and then taking the argmax of the resulting vector, and we will do so in the sequel.    When, as is usual, there is no danger of ambiguity, we will very often refer to the outcome vector as the result of the voting rule, rather than the argmax.

\begin{example}\label{ex:rule210}
As an example, suppose we let $s$ be defined so that $s$ gives two points to a ranking for its appearance in a profile, but also awards one point to the reversed cyclic order.  In this case, 
$$M=\begin{pmatrix}
2 & 1 & 0 & 0 & 0 & 0 \\
1 & 2 & 0 & 0 & 0 & 0 \\
0 & 0 & 2 & 1 & 0 & 0 \\
0 & 0 & 1 & 2 & 0 & 0 \\
0 & 0 & 0 & 0 & 2 & 1 \\
0 & 0 & 0 & 0 & 1 & 2 \\
\end{pmatrix}\; .$$
With $\p=(2,1,0,0,0,1)^T$ we get $M\p=(5,4,0,0,1,2)^T$, so $(ACBD)$ is the most popular ballot in $\p$ as well as the winner under this rule.  

Note that this rule is also neutral; for example, the second column can be obtained from the first by applying $(AB)$ (or $(CD)$), and the third from the first by applying $(BC)$.
\end{example}

\begin{example}\label{ex:rule201}
The previous example's scoring function $s$ implied that a voter would consider a cyclic order to be related to its reversal, but not to others.  What if instead we wanted a model which implied the \emph{worst} thing that could happen for a voter for $(ACBD)$ was for the seating arrangement to be $(ADBC)$?  Then we might enjoy this matrix instead.
$$M=\begin{pmatrix}
2 & 0 & 1 & 1 & 1 & 1 \\
0 & 2 & 1 & 1 & 1 & 1 \\
1 & 1 & 2 & 0 & 1 & 1 \\
1 & 1 & 0 & 2 & 1 & 1 \\
1 & 1 & 1 & 1 & 2 & 0 \\
1 & 1 & 1 & 1 & 0 & 2 \\
\end{pmatrix}$$
With $\p=(2,1,0,0,0,1)^T$ we get $M\p=(5,3,4,4,3,5)^T$, with the perhaps surprising result that $(ACBD)$ and $(ACDB)$ share the victory.  Then again, three-quarters of the electorate wished for $A$ to sit to the right of $C$, so perhaps this is appropriate.
\end{example}

The matrices in question already seem to have a very high degree of symmetry.  This is not a coincidence.  

\begin{prop}\label{prop:4matrix}
Given the order of elements of $CO_4$ in Figure \ref{figure:cyclic4}, every neutral points-based voting rule for $n=4$ comes from a matrix as follows, with $a,b,c\in\mathbb{Q}$.
$$\begin{pmatrix}
a & b & c & c & c & c \\
b & a & c & c & c & c \\
c & c & a & b & c & c \\
c & c & b & a & c & c \\
c & c & c & c & a & b \\
c & c & c & c & b & a \\
\end{pmatrix}$$
\end{prop}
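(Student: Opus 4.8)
The plan is to translate neutrality into an orbit condition. The defining property $s(g,h)=s(\sigma(g),\sigma(h))$ for all $\sigma\in S_4$ says precisely that $s$, viewed as a function on the set of ordered pairs $CO_4\times CO_4$, is constant on each orbit of the diagonal $S_4$-action. Since $M_{h,g}=s(g,h)$, the number of independent entries of $M$ equals the number of such orbits, and identifying the orbits pins down exactly which entries must coincide. So the proposition reduces to computing the orbits of $S_4$ on $CO_4\times CO_4$ and matching them to the three parameters $a,b,c$.

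First I would record that the left $S_4$-action on $CO_4$ is transitive: the stabilizer of $(ABCD)$ consists of exactly the relabelings that rotate the cyclic sequence, namely the cyclic group $\langle\rho\rangle$ generated by $\rho=(ABCD)$, of order $4$; by orbit--stabilizer the orbit has size $24/4=6=|CO_4|$. Transitivity immediately handles the diagonal: for any $g,g'$ pick $\sigma$ with $\sigma(g)=g'$, so $s(g',g')=s(\sigma g,\sigma g)=s(g,g)$, and every diagonal entry equals a common value $a$.

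Next I would treat the reversal entries using that reversal is $S_4$-equivariant: applying a permutation to a cyclic order and then reversing yields the same result as reversing first and then applying the permutation. Writing $\bar g$ for the reversal of $g$, this means $\sigma(g)=g'$ forces $\sigma(\bar g)=\bar{g'}$, so every pair of the form $(g,\bar g)$ lies in a single orbit; all these entries equal a common value $b$. Transitivity lets me reduce all remaining entries to the pairs $(g_0,h)$ for the fixed order $g_0=(ABCD)$, so it then suffices to understand the orbits of its stabilizer $\langle\rho\rangle$ on the other cyclic orders.

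The crux --- and the one genuinely computational step --- is to verify that $\langle\rho\rangle$ fixes the reversal $\overline{g_0}=(ADCB)$ and permutes the four remaining cyclic orders in a single $4$-cycle. A direct relabeling check shows $\rho$ acts on the list of Figure \ref{figure:cyclic4} as $(\,(ACBD)\;(ABDC)\;(ADBC)\;(ACDB)\,)$ while fixing both $(ABCD)$ and $(ADCB)$. Hence the stabilizer is transitive on $\{(ACBD),(ADBC),(ABDC),(ACDB)\}$, so all the corresponding off-diagonal, non-reversal entries equal a common value $c$. Collecting the three orbits --- the diagonal ($a$), the reversal pairs ($b$), and everything else ($c$) --- and reading them off against the ordering of Figure \ref{figure:cyclic4} produces exactly the displayed matrix, with $a,b,c\in\mathbb{Q}$ free since $s$ may take any rational value on each orbit. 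I expect the only delicate point to be bookkeeping the rotation action on the specific clockwise labeling, where an off-by-one in rewriting a cyclic order to start at $A$ could creep in.
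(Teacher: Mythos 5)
Your argument is correct, and it takes a genuinely different route from the paper's. The paper proceeds by hand: it first shows the matrix is symmetric by exhibiting, for each pair $g,h$, a transposition swapping them; it then moves the reversal pairs onto one another with specific transpositions such as $(BC)$ and $(AC)$; and it handles the remaining entries by a somewhat delicate argument with products of two disjoint transpositions applied twice. You instead observe that neutrality says exactly that $s$ is constant on the orbits of the diagonal $S_4$-action on $CO_4\times CO_4$, reduce (via transitivity) to the orbits of the point stabilizer $\langle(ABCD)\rangle$ on $CO_4$, and verify that these orbits are $\{g_0\}$, $\{\overline{g_0}\}$, and the remaining four orders --- giving precisely the three parameters $a,b,c$. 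Your computation of the stabilizer's action is right: $\rho=(ABCD)$ fixes $(ABCD)$ and $(ADCB)$ and cycles the other four orders as $(ACBD)\to(ABDC)\to(ADBC)\to(ACDB)\to(ACBD)$, and the $S_4$-equivariance of reversal keeps the reversal pairs in a single orbit of their own. A pleasant side effect of your approach is that the symmetry of the matrix, which the paper proves separately, falls out for free from the orbit description. Your method is also the one that generalizes: counting orbits on $CO_n\times CO_n$ is exactly the ``number of free parameters'' computation that the paper later recovers via characters and Schur's Lemma (three orbits matching the multiplicity-free decomposition $S^{(4)}\oplus S^{(2,2)}\oplus S^{(2,1,1)}$), whereas the paper itself concedes its elementary proof is ad hoc and hard to generalize. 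The only thing worth adding for completeness is the converse direction --- that any assignment of $a,b,c$ constant on these orbits is neutral --- which you note in passing and which is immediate.
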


The proof follows by elementary methods; see the end of the section.  A corollary follows by simple computation.  

\begin{cor}\label{cor:4subspace}
With everything as in the previous proposition, every neutral points-based voting rule for $n=4$ can be described in terms of the linear algebra of $M$ as follows.
\begin{itemize}
\item Every profile vector in the subspace spanned by $\{(1,1,1,1,1,1)^T\}$ \emph{must} go to $a+b+4c$ times itself.
\item Every profile vector in the (two-dimensional) span of $\{(2,2,-1,-1,-1,-1)^T,$ $(-1,-1,2,2,-1,-1)^T, (-1,-1,-1,-1,2,2)^T\}$ \emph{must} go to $a+b-2c$ times itself.
\item Every profile vector in the subspace spanned by $\{(1,-1,0,0,0,0)^T,$ $(0,0,1,-1,0,0)^T,$ $(0,0,0,0,1,-1)^T\}$ \emph{must} go to $a-b$ times itself.
\end{itemize}
Finally, since these subspaces span $\mathbb{Q}^6$, any linear combination of the vectors above goes to the same linear combination of their images.
\end{cor}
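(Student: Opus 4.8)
The plan is to verify the corollary by direct computation, made transparent by the block structure of $M$. Partition its rows and columns into the three consecutive pairs $\{1,2\},\{3,4\},\{5,6\}$, so that $M$ becomes a $3\times 3$ array of $2\times 2$ blocks in which every diagonal block equals $D:=\begin{pmatrix} a & b \\ b & a\end{pmatrix}$ and every off-diagonal block equals $cJ$, where $J:=\begin{pmatrix} 1 & 1 \\ 1 & 1\end{pmatrix}$. Two elementary facts drive everything: $D$ has eigenvector $(1,1)^T$ with eigenvalue $a+b$ and eigenvector $(1,-1)^T$ with eigenvalue $a-b$, while $J$ scales $(1,1)^T$ by $2$ and annihilates $(1,-1)^T$.

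First I would treat the vectors built from the within-pair \emph{symmetric} direction $(1,1)^T$, namely those of the form $\big(v_1(1,1),\,v_2(1,1),\,v_3(1,1)\big)^T$ for a coefficient vector $v=(v_1,v_2,v_3)$. Applying $M$ sends the $i$-th block to $v_i(a+b)(1,1)^T + \big(\sum_{j\neq i} v_j\big)2c(1,1)^T$. For $v=(1,1,1)$ the cross term is $2\cdot 2c$, yielding eigenvalue $a+b+4c$ and the first bullet; for $v$ with $\sum_j v_j=0$---which is precisely the $2$-dimensional span of the second bullet, since its three listed generators themselves sum to zero---the cross term equals $-v_i\cdot 2c$, yielding eigenvalue $a+b-2c$. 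For the third family, built from the within-pair \emph{antisymmetric} direction $(1,-1)^T$, the off-diagonal blocks $cJ$ vanish on $(1,-1)^T$, so only the diagonal $D$ acts and every such vector is scaled by $a-b$, regardless of which pair it occupies; this gives all three generators of the third bullet at once.

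Finally I would assemble the spanning and linearity claim. The first two families lie in the $3$-dimensional subspace $W_+$ on which each consecutive pair of coordinates is equal, and the third family spans the complementary $3$-dimensional subspace $W_-$ on which each pair is negated, with $W_+\oplus W_-=\mathbb{Q}^6$; inside $W_+$ the all-ones vector together with the zero-sum family exhausts all $3$ dimensions, and the three antisymmetric vectors are an obvious basis of $W_-$. Hence the three eigenspaces, of dimensions $1$, $2$, and $3$, span $\mathbb{Q}^6$, and since $M$ is linear and scales each eigenvector by its eigenvalue, writing any profile $\p$ in this basis and applying $M$ term by term gives the final assertion. The only point demanding the slightest care---hardly an obstacle---is recognizing that the three generators of the second family are linearly dependent and so contribute only a $2$-dimensional eigenspace, making the dimensions total exactly $6$; the eigenvector checks themselves are purely mechanical.
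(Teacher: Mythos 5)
Your proof is correct and is essentially the ``simple computation'' the paper invokes for this corollary: a direct verification from the matrix form of Proposition \ref{prop:4matrix}, neatly organized via the $2\times 2$ block structure with diagonal blocks $\begin{pmatrix} a & b\\ b & a\end{pmatrix}$ and off-diagonal blocks $cJ$. All eigenvalue computations, the observation that the second family is only two-dimensional because its three generators sum to zero, and the spanning argument check out, so nothing further is needed.
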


Each of these subspaces also has an interpretation in terms of cyclic orders\footnote{All this should be strongly reminiscent of decompositions in \cite{ZwickerSpin,SaariBGOV,SaariJET3}, not to mention the treatments mentioned in the introduction to Section \ref{section:repthry}.}.  
\begin{itemize}
\item The first component is a trivial profile where all orders are equally likely.
\item The third space is comprised of linear combinations of profiles that strongly support one cyclic order and do not support its reversal.  
\item The second component may seem more mysterious, but is not if we recall Figure \ref{figure:cyclic4}.  Each of the \emph{three} vectors may be thought of as supporting one of the \emph{three} options of who (of $B,C,D$) is \emph{not} adjacent to $A$ in a given cyclic order.  Note also that these three vectors sum to zero, just like the Basic/Borda vectors in Saari's work.
\end{itemize}

\begin{example}\label{ex:rule210analyze}
Let's examine Example \ref{ex:rule210} from this standpoint.  We can write $\p=(2,1,0,0,0,1)^T$ as 
\begin{multline*}
\frac{2}{3}(1,1,1,1,1,1)^T + \frac{1}{3}(2,2,-1,-1,-1,-1)^T + \frac{-1}{6}(-1,-1,2,2,-1,-1)^T \\+ \frac{1}{2}(1,-1,0,0,0,0)^T+\frac{-1}{2}(0,0,0,0,1,-1)^T\; .
\end{multline*}
Since $a+b+4c=3$, $a+b-2c=3$, and $a-b=1$, we compute $f(\p)$ from
\begin{multline*}
2(1,1,1,1,1,1)^T + (2,2,-1,-1,-1,-1)^T + \frac{-1}{2}(-1,-1,2,2,-1,-1)^T \\+ \frac{1}{2}(1,-1,0,0,0,0)^T+\frac{-1}{2}(0,0,0,0,1,-1)^T
\end{multline*}
which is indeed $(5,4,0,0,1,2)^T$.
\end{example}

What is remarkable about this decomposition is that any neutral points-based rule can be fully described by three numbers, each of which simply dilate a given subspace of the profile space, and each of which (subspaces) has a natural interpretation.  One might say that the rule is really picking which \emph{spaces} to emphasize, rather than which cyclic orders.

As a result, we can analyze any method.  The one in Examples \ref{ex:rule210} and \ref{ex:rule210analyze} preserves non-adjacency more than it emphasizes the difference between a cyclic order and its reversal.  This isn't immediately intuitive, since the columns of the matrix are permutations of $(2,1,0,0,0,0)^T$, which would seem to definitely emphasize that reversals are not the same.  However, we can actually break \emph{this vector} into weighted pieces as well\footnote{Again, recall papers like \cite{SaariJET3} and \cite{OrrisonSymmetry}, as we will see below.}, so it is truly a combination of all emphases: $$(2,1,0,0,0,0)^T = \frac{1}{3}(1,1,1,1,1,1)^T+\frac{1}{2}(2,2,-1,-1,-1,-1)^T+\frac{1}{2}(1,-1,0,0,0,0)^T\; .$$

We can also create rules that have desired properties, just as with analyzing other voting procedures.  If we want a method that ignores full ties, we should ensure that $a+b+4c=0$.  Here is a more interesting one.

\begin{example}
Suppose we want a vote using a points-based rule that only takes into account the part of a profile that contrasts a cyclic order with its reversal.   One can think of this as requiring that the first and second subspaces in question span the kernel of the matrix.

So if we let $a+b+4c=0$ and $a+b-2c=0$, then by solving we see that $c=0$ and $a=-b$.  A sample matrix might be:
$$\begin{pmatrix}
1 & -1 & 0 & 0 & 0 & 0 \\
-1 & 1 & 0 & 0 & 0 & 0 \\
0 & 0 & 1 & -1 & 0 & 0 \\
0 & 0 & -1 & 1 & 0 & 0 \\
0 & 0 & 0 & 0 & 1 & -1 \\
0 & 0 & 0 & 0 & -1 & 1 \\
\end{pmatrix}$$
An observant reader will note this is just the matrix for \ref{ex:rule201} but with every entry lowered by one. 
\end{example}

It is evident from Corollary \ref{cor:4subspace} that there are only a few possible kernels and images (the spans of the subspaces in question).  This is a sign of the underlying symmetry.

\begin{proof}[Proof of Proposition \ref{prop:4matrix}]
Recall that our matrix has $M_{h,g}=s(g,h)$ for a scoring function $s$, which by hypothesis is neutral.  So our goal would be to show:
\begin{itemize}
\item That $s(g,g)$ is the same for all $g$.
\item That $s(g,g')$ is the same for all reversal pairs $g,g'$.
\item All other values of $s$ are the same as each other.
\end{itemize}
The first requirement is easy to see, because neutrality says $s(g,g)=s(\sigma(g),\sigma(g))$ and we certainly have $\sigma$ which change any cyclic order to another one.  Similarly, $s(g,h)=s(h,g)$, as it is easy to see by inspection that for any pair $g,h$ there is a $2$-cycle $\sigma_{g,h}$ such that $\sigma_{g,h}(g)=h$ and vice versa, so our matrix is symmetric.  

This shows that $s(g,g')=s(g',g)$ for \emph{each} set of reversal pairs $g,g'$.  To see that they are \emph{all} the same, note that $\sigma=(BC)$ sends the first reversal \emph{pair} $(ACBD),(ADBC)$ to the second pair, and $\sigma=(AC)$ does likewise to the third pair (and apply neutrality).  

Now consider $s(g,h)$ where $h\neq g,g'$.  Apply a $2$-cycle $\sigma$ such that $\sigma(g)=g'$ (such as $(AB)$ or $(CD)$ for $(ACBD)$).  Then $s(g,h)=s(g',\sigma(h))$, where $\sigma(h)\neq g,g',h$.  There are two possible choices for $\sigma$, which are disjoint, so their product (such as $(AB)(CD)$) sends $h$ to $h'$.  So if we do this twice we have that $s(g,h)=s(g,h')$ for any $g,h$ that are not a reversal pair!  By symmetry the same is true for $s(g,h)=s(g',h)$, and so all of the other entries are equal to each other as desired.
\end{proof}

\section{Representation Theory}\label{section:repthry}
There is one loose end in the previous section. Consider the subspaces in Corollary \ref{cor:4subspace}. Post-hoc it's easy to see they behave nicely, and with some thought about the structure of the cyclic orders one could come up with them `by hand'; still, it seems a little mysterious where they came from.  Likewise, although it's not a loose end, the proof of Proposition \ref{prop:4matrix} is quite ad-hoc, and one might despair of generalizing it.

In principle, it is nearly always \emph{possible} to analyze voting systems without any tools of higher power.  However, what is possible is not always doable in practice, as the sheer number of options demands more.  We choose, as this more powerful tool for analyzing both the systems and the spaces of profiles involved, the representation theory of the symmetric group.  This section will recall the information we need, with \cite{Sagan} as one of a number of handy references.

In doing so we follow the introduction of these techniques in mathematical social sciences, notably by \cite{KWAlgGames} in cooperative game theory and \cite{OrrisonSymmetry} in voting theory.  More powerful techniques do need some justification, of course.  One of the advantages of this approach is the ability to provide very general results; as an example, \cite{BarceloEtAl} and \cite{LeeThesis} do this with some of the questions asked in \cite{RatliffPublicChoice}.  Another advantage is to provide a unifying framework, as with \cite{OrrisonSymmetry} and \cite{CrismanPermuta} vis-a-vis the decompositions in the magisterial \cite{SaariStruct1,SaariStruct2}.  See \cite{CrismanOrrison} for a recent overview.

Although we began our analysis with the case where the ballot for a voting procedure is the same as a potential outcome, from Subsection \ref{subsection:newthings} and the comments after Remark \ref{remark:cyclicternary} we know our motivation is to be able to handle a wider variety of possible ballots, such as ones asking for partial information on adjacency or non-adjacency.  With that in mind, we might as well generalize Definition \ref{defn:points-based} of a points-based rule somewhat (to a generalized scoring rule, in the terminology of \cite{Zwicker}).  

\begin{definition}\label{defn:points-based-general}
Given $n$, suppose we have a \emph{ballot set} $B$, the set of \emph{profiles}\footnote{(Again, if there are negative entries, especially if the sum of entries is zero, sometimes we call it a \emph{profile differential}.)} $\mathbb{Q}^{\text{card}(B)}$ on $B$, and a \emph{ballot-scoring function} $s:B\times CO_n\to \mathbb{Q}$.  
Then a \emph{points-based voting rule} $f$, with \emph{ballot set} $B$ and \emph{outcome set} the set of cyclic orders (for a given $n$), is a function from the set of all profiles on $B$ to a nonempty subset of $CO_n$, such that $f(\p)$ is the (set of) cyclic order(s) $h$ which maximizes $\sum_{g\in B} \p[g]s(g,h)$.  

Further, assume that there is an action of $S_n$ on $B$ (extended to the set of profiles).  Then the rule $f$ is \emph{neutral} if $s$ is neutral, in the sense that for any $\sigma\in S_n$, $s(g,h)=s(\sigma(g),\sigma(h))$.
\end{definition}

As before, the function $s$ represents how many points are allocated to the cyclic order $h$ for a single vote for the ballot $g$.  The matrix $M$ we visualize now would be indexed by the elements of $CO_n$ in rows and elements of $B$ in columns, so that $M_{h,g}=s(g,h)$ as before; likewise, the system takes the argmax of the resulting vector $M\p$ to obtain the result for any profile $\p$.  Assuming that $B$ does have an $S_n$-action, we can also define $\sigma(\p)[x]=\p[\sigma^{-1}x]$ on the space of profiles, as in \ref{remark:extend}.

In that case, we can call the ballot space $\mathbb{Q}B$ and the outcome space $\mathbb{Q}CO_n$.  We make the following observations:
\begin{itemize}
\item By definition, our profile and outcome spaces have the $S_n$-action described, with a $\mathbb{Q}$-vector space structure compatible with $S_n$.   That is, both spaces have a $\mathbb{Q}S_n$-module structure.
\item Since the scoring functions are neutral ($s(g,h)=s(\sigma(g),\sigma(h))$), the $S_n$-action propagates from profiles to outcomes.  Given $\sigma\in S_n$ and $h\in CO_n$,
$$\sigma\left(\sum_{g\in B)}\p[g] s(g,h)\right)=\sum_{g\in B)}\p[(\sigma^{-1}(g)] s(\sigma^{-1}(g),\sigma^{-1}(h))\; ,$$ 
which is the same as the effect of $\sigma$ on the voting rule.  (This is the usual way neutrality would be defined.)
\item Hence\footnote{Abusing notation slightly as alluded to after \ref{defn:points-based}.} $$f:\mathbb{Q}B\to \mathbb{Q}CO_n$$ is a $\mathbb{Q}S_n$-module homomorphism.
\end{itemize}

Since we know $f$ is a $\mathbb{Q}S_n$-module homomorphism, we can use representation theory to analyze it, via decomposition of $\mathbb{Q}B$ and $\mathbb{Q}CO_n$ into irreducible submodules, and the following key results. 

\begin{prop}
Let $G$ be a group.  Any (finitely generated) $\mathbb{Q}G$-module has a decomposition as a direct sum of irreducible $\mathbb{Q}G$-modules which is unique up to order.
\end{prop}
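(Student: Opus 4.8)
The plan is to prove the two assertions separately: first \emph{existence} of a decomposition into irreducibles (complete reducibility), and then \emph{uniqueness} of that decomposition up to reordering and isomorphism. Throughout I take $G$ to be finite, which is the only case we ever use (always $G=S_n$); finiteness is essential, since the argument below relies on being able to divide by $|G|$ in $\mathbb{Q}$, and indeed the statement fails for infinite $G$.

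For existence I would run the standard averaging argument (Maschke's theorem). A finitely generated $\mathbb{Q}G$-module $M$ is finite-dimensional over $\mathbb{Q}$, so I can induct on $\dim_{\mathbb{Q}}M$. If $M$ is irreducible we are done; otherwise pick a proper nonzero submodule $N$, choose any $\mathbb{Q}$-linear projection $p:M\to M$ with image $N$ and $p|_N=\mathrm{id}_N$ (ignoring the $G$-action), and symmetrize it by setting $\bar p=\frac{1}{|G|}\sum_{g\in G} g\,p\,g^{-1}$. A short check shows $\bar p$ is $\mathbb{Q}G$-linear, still has image $N$, and is still the identity on $N$; hence $\bar p$ is an idempotent module endomorphism and $M=N\oplus\ker\bar p$ as $\mathbb{Q}G$-modules. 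Both summands have strictly smaller dimension, so the inductive hypothesis finishes the existence claim.

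For uniqueness, suppose $M\cong\bigoplus_i U_i\cong\bigoplus_j V_j$ with all $U_i,V_j$ irreducible. The clean way to pin down multiplicities is via Schur's Lemma: for irreducible $\mathbb{Q}G$-modules $S,T$ the space $\mathrm{Hom}_{\mathbb{Q}G}(S,T)$ vanishes unless $S\cong T$, and $D_S:=\mathrm{End}_{\mathbb{Q}G}(S)$ is a division ring. Applying $\mathrm{Hom}_{\mathbb{Q}G}(S,-)$ to each decomposition shows that $\mathrm{Hom}_{\mathbb{Q}G}(S,M)$ is a free $D_S$-module whose rank equals the number of summands isomorphic to $S$ in \emph{either} decomposition; those two counts therefore agree for every isomorphism class $S$, which is exactly uniqueness up to order. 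Equivalently, one may invoke the Krull--Schmidt theorem, since $M$ has finite length and irreducibles are in particular indecomposable.

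The main obstacle is this uniqueness step, and specifically the fact that $\mathbb{Q}$ is not algebraically closed: Schur's Lemma over $\mathbb{Q}$ only yields that $D_S=\mathrm{End}_{\mathbb{Q}G}(S)$ is a division ring rather than $\mathbb{Q}$ itself, so one must phrase the multiplicity count as a $D_S$-rank rather than a naive $\mathbb{Q}$-dimension. In the applications that follow this subtlety is harmless, because the rational irreducibles of $S_n$ are absolutely irreducible (the Specht modules are realizable over $\mathbb{Q}$ with $\mathrm{End}_{\mathbb{Q}S_n}=\mathbb{Q}$), so every $D_S$ is just $\mathbb{Q}$ and multiplicities can be read off as ordinary $\mathbb{Q}$-dimensions of $\mathrm{Hom}$-spaces; see \cite{Sagan}.
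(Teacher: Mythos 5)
Your proof is correct. Note that the paper does not actually prove this proposition at all --- it is stated as recalled background, with the surrounding results cited to \cite{Sagan} --- so there is no in-paper argument to compare against; your Maschke-plus-Schur (or Krull--Schmidt) argument is precisely the standard proof those references supply. Two points in your write-up are worth keeping: you correctly observe that the hypothesis must really be that $G$ is \emph{finite} (the paper's ``Let $G$ be a group'' is too general as literally written, since complete reducibility fails for infinite groups, though only $G=S_n$ is ever used), and your remark that $\mathrm{End}_{\mathbb{Q}G}(S)$ is a priori only a division ring is the right caveat over $\mathbb{Q}$ --- it is exactly the issue the paper later dispatches by citing \cite{JamesKerber} for the fact that $\mathbb{Q}$ is a splitting field for $S_n$, so that all these endomorphism rings collapse to $\mathbb{Q}$ and multiplicities become ordinary dimensions of Hom-spaces, as used in Proposition \ref{prop:characterfacts}.
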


\begin{schur}
Let $G$ be a group.  If $M$ and $N$ are irreducible $\mathbb{Q}G$-modules and $g:M\to N$ is a $\mathbb{Q}G$-module homomorphism, then either $g=0$ or $g$ is an isomorphism (\cite{Sagan} Theorem 1.6.5).  

Moreover, if $G=S_n$, then $g=0$ or $g$ is actually multiplication by an element of $\mathbb{Q}$ (\cite{Sagan} Corollary 1.6.8, along with \cite{JamesKerber} Theorem 2.1.12 that $\mathbb{Q}$ is a splitting field for $S_n$).
\end{schur}

\begin{prop}\label{prop:Specht}
Any irreducible representation of $S_n$ (irreducible $\mathbb{Q}S_n$-module) is isomorphic\footnote{This is true over both $\mathbb{Q}$ and $\mathbb{C}$ for the same reason as in the previous lemma.} to one of a finite set of (nonisomorphic) modules.  These modules are indexed by the partitions $\lambda$ of $n$ (denoted $\lambda \vdash n$), and are called the \emph{Specht modules} (\cite{Sagan} Theorem 2.4.6), denoted $S^\lambda$.
\end{prop}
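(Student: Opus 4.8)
The plan is to follow the classical construction of the Specht modules and then invoke a counting argument to show that they exhaust the irreducibles. For each partition $\lambda \vdash n$ I would first build the combinatorial scaffolding: Young diagrams, tableaux, and the row-equivalence classes called tabloids. The \emph{permutation module} $M^\lambda$ is the $\mathbb{Q}$-vector space with basis the tabloids of shape $\lambda$, carrying the obvious $S_n$-action by permuting entries. For a tableau $t$ with column stabilizer $C_t$, I would define the polytabloid $e_t = \sum_{\pi \in C_t}\operatorname{sgn}(\pi)\,\pi\{t\}$ and let the \emph{Specht module} $S^\lambda \subseteq M^\lambda$ be the span of all polytabloids. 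These are manifestly $\mathbb{Q}S_n$-submodules, so this produces a candidate irreducible for each $\lambda$.

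The heart of the argument is to show that each $S^\lambda$ is irreducible and that distinct partitions yield non-isomorphic modules. For irreducibility I would equip $M^\lambda$ with the $S_n$-invariant symmetric bilinear form making the tabloids orthonormal, and prove James's submodule theorem: any submodule $U \subseteq M^\lambda$ either contains $S^\lambda$ or lies inside its orthogonal complement $(S^\lambda)^\perp$. Applied to a submodule of $S^\lambda$ itself, and using that in characteristic zero the form is nondegenerate on $S^\lambda$ (so $S^\lambda \cap (S^\lambda)^\perp = 0$), this forces the submodule to be $0$ or all of $S^\lambda$. The key combinatorial inputs are the dominance lemma on tabloids and the sign lemma controlling how column-alternating sums interact with row-constant tabloids; together these show the polytabloids behave rigidly under the group action. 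For distinctness I would argue that $S^\lambda$ embeds in $M^\mu$ only when $\lambda \trianglerighteq \mu$ in dominance order, so an isomorphism $S^\lambda \cong S^\mu$ would force both $\lambda \trianglerighteq \mu$ and $\mu \trianglerighteq \lambda$, hence $\lambda = \mu$.

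Finally, to see there are \emph{no other} irreducibles, I would invoke the general character-theoretic fact that over a splitting field the number of isomorphism classes of irreducible modules equals the number of conjugacy classes of the group. Since \cite{JamesKerber} guarantees $\mathbb{Q}$ is a splitting field for $S_n$ (as already quoted in Schur's Lemma above), and since the conjugacy classes of $S_n$ are indexed by cycle type, i.e.\ by the partitions of $n$, there are exactly $|\{\lambda \vdash n\}|$ irreducibles. Having produced that many pairwise non-isomorphic irreducibles $S^\lambda$, we conclude they form a complete set.

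I expect the main obstacle to be James's submodule theorem and the combinatorial lemmas on which it rests: verifying that the polytabloids transform predictably and that the invariant form is nondegenerate on $S^\lambda$ requires careful bookkeeping with the dominance order, and it is here (rather than in the construction or the final count) that the real content lies. The completeness step, by contrast, is essentially a dimension count dressed up as character theory and presents no difficulty once the Specht modules and their distinctness are in hand.
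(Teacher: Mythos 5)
Your outline is correct and is essentially the standard argument; the paper itself offers no proof of this proposition, simply citing \cite{Sagan} Theorem 2.4.6, and your sketch (permutation modules, polytabloids, James's submodule theorem with the invariant bilinear form, distinctness via the dominance order, and completeness by counting conjugacy classes over a splitting field) is precisely the development carried out in that reference. Nothing further is needed.
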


\begin{example}\label{ex:3ballotspace}
The standard example of this in voting theory is when $n=3$ and the ballot space is $\mathcal{L}(\{A,B,C\})$.   In this case $(3),(2,1),(1,1,1)\vdash 3$ and $$\mathbb{Q}B\simeq \mathbb{Q}S_3 \simeq S^{(3)}\oplus S^{(2,1)^{\oplus^2}}\oplus S^{(1,1,1)}\; .$$ We can identify $S^{(3)}$ as the subspace of complete ties generated by $(1,1,1,1,1,1)^T$, and $S^{(1,1,1)}$ is the subspace generated by $(1,-1,1,-1,1,-1)^T$, where permutations coming from the cyclic order $(ABCA)$ have the opposite value of those coming from $(ACBA)$.   This is harnessed in \cite{SaariBGOV,SaariAllThree,ZwickerSpin}, by name in \cite{OrrisonSymmetry,CrismanPermuta}.
\end{example}

Note that in the previous example we had the fortune that the ballots in $B$ actually could be considered as elements of a group.  That does not happen with $CO_n$, unfortunately, but something \emph{nearly} as good is true which will allow us to decompose $\mathbb{Q}CO_n$ in practical cases.  We'll need a bit of additional terminology, but it will be worth it. 
First, we need to identify $\mathbb{Q}CO_n$ more precisely in group-theoretic terms.

\begin{definition}\label{defn:cyclicordercosets}
Let $H=\langle (123\cdots n)\rangle\leqslant S_n$.  We can restate \ref{defn:cyclicorder} by saying that $CO_n$ is the set of (left) cosets $S_n/H$.

Further, the $\mathbb{Q}S_n$-module $\mathbb{Q}CO_n$ is the \emph{permutation representation} of $S_n/H$.  This is defined by taking a vector space over $\mathbb{Q}$ with one basis vector $b_{gH}$ for each coset in $S_n/H$, and then letting $g'(b_{gH})=b_{g'gH}$ (and extending linearly).
\end{definition}

Finally, we need to recall the prime tool for \emph{computing} decomposition of representations.

\begin{definition}\label{defn:characterfacts}
Given an $\mathbb{Q}G$-module $V$, for each $g\in G$ there is a linear transformation on $V$ (as a finite-dimensional vector space) induced by applying $g$ to each basis element.  We can create the matrix $X(g)$ to represent this map.  We call its trace $\text{tr}(X(g))$ the \emph{character} of $V$ on $g\in G$, usually denoted $\chi_V$.
\end{definition}

\begin{prop}\label{prop:characterfacts}
The characters $\chi_\lambda$ of the irreducible modules $V_\lambda$ for $G$ are particularly nice, in that there is an inner product $(\chi,\chi')$ on the set of \emph{all} functions\footnote{In general from $G\to \mathbb{C}$, but over $S_n$ we may once again restrict to $\mathbb{Q}$.} $G\to \mathbb{Q}$ such that $(\chi_\lambda,\chi_V)$ is a nonnegative integer where there are exactly $(\chi_\lambda,\chi_V)$ `copies' of $V_\lambda$ in the decomposition of $V$.  (See \cite{Sagan} Section 1.9.)
\end{prop}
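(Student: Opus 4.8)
The plan is to obtain this as the standard corollary of the orthonormality of irreducible characters together with the additivity of characters on direct sums, following \cite{Sagan}. I would first fix the (Hermitian) inner product
$$(\chi,\chi')=\frac{1}{|G|}\sum_{g\in G}\chi(g)\overline{\chi'(g)}$$
on the space of functions $G\to\mathbb{C}$; as the footnote to the statement indicates, for $G=S_n$ the irreducible characters are rational-valued and $\mathbb{Q}$ is a splitting field (\cite{JamesKerber}), so the conjugation is harmless and the pairing restricts to functions $G\to\mathbb{Q}$. The argument then has two halves: (i) establish that the irreducible characters form an orthonormal set, $(\chi_\lambda,\chi_\mu)=\delta_{\lambda\mu}$; and (ii) read off the multiplicity from this.

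Part (i) is the substance of the proof and is where I expect the real obstacle to lie. Given irreducibles $V_\lambda,V_\mu$ with matrix representations $X_\lambda,X_\mu$, I would take an arbitrary linear map $\phi:V_\mu\to V_\lambda$ and average it to
$$\tilde\phi=\frac{1}{|G|}\sum_{g\in G}X_\lambda(g)\,\phi\,X_\mu(g)^{-1}\; ,$$
which a short reindexing computation shows is a $\mathbb{Q}G$-module homomorphism. Schur's Lemma (available above) then forces $\tilde\phi=0$ when $\lambda\neq\mu$ and $\tilde\phi=c\,\mathrm{Id}$ for a scalar $c$ when $\lambda=\mu$. Letting $\phi$ range over the elementary matrices and reading off entries produces the Schur orthogonality relations for matrix coefficients; summing the diagonal relations (that is, taking traces on both factors) collapses them to $(\chi_\lambda,\chi_\mu)=\delta_{\lambda\mu}$. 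The delicate point is the diagonal case $\lambda=\mu$: one must verify that the Schur scalar is $c=\mathrm{tr}(\phi)/\dim V_\lambda$ and, crucially, that the endomorphism ring is just the ground field so that no division-algebra ambiguity enters. This is exactly where the splitting-field hypothesis for $S_n$ earns its keep, matching the second half of Schur's Lemma as stated above.

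With orthonormality in hand, part (ii) is immediate and I would not belabor it. Characters are additive on direct sums, so the unique decomposition of $V$ into irreducibles (guaranteed by the decomposition proposition above, with the irreducibles being the Specht modules of Proposition \ref{prop:Specht}) may be written $V\cong\bigoplus_\mu V_\mu^{\oplus m_\mu}$ with $\chi_V=\sum_\mu m_\mu\chi_\mu$. Pairing against a fixed $\chi_\lambda$ and using orthonormality gives
$$(\chi_\lambda,\chi_V)=\sum_\mu m_\mu(\chi_\lambda,\chi_\mu)=m_\lambda\; ,$$
which is precisely the number of copies of $V_\lambda$ in $V$ and is in particular a nonnegative integer, as required.
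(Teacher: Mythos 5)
Your proof is correct and is essentially the argument the paper defers to: the paper offers no proof of this proposition, simply citing \cite{Sagan} Section 1.9, and your averaging-plus-Schur derivation of character orthonormality followed by pairing against $\chi_V$ is exactly the standard argument found there. You also correctly identify that the splitting-field hypothesis for $S_n$ is what lets the second half of Schur's Lemma (scalars, not a division algebra) close the diagonal case over $\mathbb{Q}$, which matches how the paper states Schur's Lemma.
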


\begin{example}
In Example \ref{ex:3ballotspace}, we have $(\chi_{(2,1)},\chi_{\mathbb{Q}B})=2$, $(\chi_{(1,1,1)},\chi_{\mathbb{Q}B})=1$.
\end{example}

\begin{prop}
The permutation representation $\mathbb{Q}CO_n$ has character $\chi$ where\footnote{See e.g.~\cite{Sagan} Exercise 1.13.3b for this standard fact; see also formula (1.28) since this is actually an induced representation.} 
\begin{multline*}
\chi(g)=\text{ the number of fixed points of }g\text{ acting on }S_n/H\\
= \text{ the number of cyclic orders left unchanged by }g$$
\end{multline*}
\end{prop}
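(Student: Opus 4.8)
The plan is to compute the trace directly from the definition of the permutation representation, using the elementary fact that the matrix of any permutation action is a permutation matrix whose diagonal entries flag precisely the fixed points. Following Definition~\ref{defn:characterfacts}, I would fix $g\in S_n$ and write down the matrix $X(g)$ for the linear map that $g$ induces on $\mathbb{Q}CO_n$ in the basis $\{b_{xH}\}$ indexed by cosets $xH\in S_n/H$. By Definition~\ref{defn:cyclicordercosets}, this map sends $b_{xH}$ to $b_{gxH}$, so the $(yH,xH)$ entry of $X(g)$ equals $1$ when $gxH=yH$ and $0$ otherwise.

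Next I would read off the trace. The diagonal entry indexed by $xH$ is $X(g)_{xH,xH}$, which is $1$ exactly when $gxH=xH$, i.e.\ exactly when $g$ fixes the coset $xH$, and $0$ otherwise. Summing over the diagonal,
$$\chi(g)=\text{tr}(X(g))=\sum_{xH\in S_n/H} X(g)_{xH,xH}=\#\{xH\in S_n/H : gxH=xH\},$$
which is by definition the number of fixed points of $g$ acting on $S_n/H$. This establishes the first equality.

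For the second equality, I would simply invoke the identification of $CO_n$ with $S_n/H$ from Definition~\ref{defn:cyclicordercosets}: a coset $xH$ \emph{is} a cyclic order, and $g$ fixes it as a coset precisely when $g$ leaves the corresponding cyclic order unchanged under the left action of Remark~\ref{remark:symmetricaction}. Hence the count of fixed cosets coincides with the count of cyclic orders left unchanged by $g$.

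There is no serious obstacle here; the result is the standard ``character of a permutation representation counts fixed points'' fact, and the only thing requiring care is bookkeeping the direction of the action (left multiplication on cosets) so that the diagonal condition $gxH=xH$ is stated correctly. I would therefore keep the write-up short, treating it as a specialization of the general statement cited from \cite{Sagan} rather than reproving the general theory.
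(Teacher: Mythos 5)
Your argument is correct and matches the paper's treatment: the paper gives no proof of this proposition at all, simply citing it as the standard ``character of a permutation representation counts fixed points'' fact from \cite{Sagan}, and your direct trace computation on the permutation matrix is exactly the standard verification of that fact, with the coset-to-cyclic-order identification handled correctly. Nothing further is needed.
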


This is something we can compute!
Recall that characters are a \emph{class function}, that is to say the value of a character on a group element $g$ depends only on its conjugacy class in $G$.  Further recall that the classes of $S_n$ are precisely determined by their cycle decomposition (also indexed by the partitions of $n$).
\begin{theorem}\label{thm:cocharacter}
Let $D$ be the set of (positive) divisors $d$ of $n$, with divisor complement $e=\frac{n}{d}$.  Each divisor corresponds to a conjugacy class of $S_n$ of $C^e_d$. 
$$\chi(g)=
\begin{cases}
\frac{e!d^e}{n}\phi(d) & g\text{ of form }C^e_d\\
0 & \text{ otherwise }
\end{cases}$$
\end{theorem}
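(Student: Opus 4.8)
The plan is to recognize $\mathbb{Q}CO_n$ as the induced representation $\mathrm{Ind}_H^{S_n}\mathbf{1}$ of the trivial character of $H=\langle(12\cdots n)\rangle$, which is exactly what the footnote's reference to formula (1.28) signals, and to apply the Frobenius character formula for induced representations. Starting from the coset description in Definition \ref{defn:cyclicordercosets}, $\chi(g)$ counts the cosets $\sigma H$ with $g\sigma H=\sigma H$, i.e.\ with $\sigma^{-1}g\sigma\in H$. Since the inducing character is identically $1$ on $H$ and $|H|=n$, the induced-character formula reads
$$\chi(g)=\frac{1}{n}\,\bigl|\{\sigma\in S_n:\sigma^{-1}g\sigma\in H\}\bigr|.$$

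Next I would pin down which conjugacy classes contribute. Because conjugation preserves cycle type, the condition $\sigma^{-1}g\sigma\in H$ forces $g$ to share a cycle type with some element of $H$. The nonidentity elements of $H$ are the powers $(12\cdots n)^k$, and $(12\cdots n)^k$ decomposes into $\gcd(k,n)$ cycles each of length $n/\gcd(k,n)$; hence every element of $H$ has a uniform cycle type $C_d^e$ for a divisor $d\mid n$ with $e=n/d$. This yields the ``$0$ otherwise'' branch at once: if $g$ does not have one of these uniform types, no conjugate of $g$ lies in $H$, so $\chi(g)=0$.

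For $g$ of type $C_d^e$ I would count by target element, writing the cardinality as $\sum_{h\in H}\bigl|\{\sigma:\sigma^{-1}g\sigma=h\}\bigr|$. Each summand is empty unless $h$ is conjugate to $g$, in which case the solution set is a coset of the centralizer $C_{S_n}(g)$ and so has size $|C_{S_n}(g)|$. The number of $h\in H$ conjugate to $g$ is the number of $k\in\{0,\dots,n-1\}$ with $\gcd(k,n)=e$, which equals $\phi(n/e)=\phi(d)$. Combining these gives $\bigl|\{\sigma:\sigma^{-1}g\sigma\in H\}\bigr|=\phi(d)\,|C_{S_n}(g)|$, and the standard centralizer count for an element all of whose cycles have the same length $d$ (with $e$ of them) is $|C_{S_n}(g)|=d^e\,e!$. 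Dividing by $n$ then produces $\chi(g)=\frac{e!\,d^e}{n}\phi(d)$, matching the claim.

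There is no single hard obstacle here; the work is really two small bookkeeping facts whose accuracy is the crux. First, that the powers of the $n$-cycle realize exactly $\phi(d)$ elements of each uniform type $C_d^e$, a short Euler-totient computation. Second, the centralizer order $d^e\,e!$ for a single-length cycle type. One subtlety worth recording if one prefers to count cosets directly rather than invoke the induced-character formula: the condition $\sigma^{-1}g\sigma\in H$ is constant on each left coset $\sigma H$ precisely because $H$ is abelian, so conjugation by $\eta\in H$ fixes $H$ pointwise; this is exactly what justifies passing from the count of $\sigma$ to the count of cosets by dividing by $|H|=n$.
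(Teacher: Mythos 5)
Your proof is correct, and it takes the route the paper explicitly declines to take: the paper's proof remarks that the result ``is an easy exercise using standard induced character facts'' and then instead carries out a direct combinatorial count of the cyclic orders fixed by $g$. Concretely, the paper argues that a fixed cyclic order must intertwine the $e$ cycles of $g$, then counts placements ($n$ choices for the first cycle, $\phi(d)$ compatible cyclic orderings of its entries, $d(e-i+1)$ choices for each subsequent cycle whose internal ordering is then forced) and divides by $n$ for the cyclic equivalence. You instead recognize $\mathbb{Q}CO_n$ as $\mathrm{Ind}_H^{S_n}\mathbf{1}$, apply the Frobenius formula, and reduce everything to two standard facts: that $H$ contains exactly $\phi(d)$ elements of type $C_d^e$, and that the centralizer of such an element has order $d^e\,e!$. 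Both arguments are sound and yield $\frac{e!\,d^e}{n}\phi(d)$; yours is shorter and leans on off-the-shelf machinery, while the paper's placement argument exhibits \emph{which} cyclic orders are fixed (the intertwined ones), a picture it reuses informally elsewhere. One small note: your closing ``subtlety'' about constancy of the condition $\sigma^{-1}g\sigma\in H$ on left cosets does not actually require $H$ abelian --- for $\eta\in H$ one has $(\sigma\eta)^{-1}g(\sigma\eta)=\eta^{-1}(\sigma^{-1}g\sigma)\eta\in H$ iff $\sigma^{-1}g\sigma\in H$ simply because $H$ is closed under conjugation by its own elements --- but this over-justification is harmless.
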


\begin{cor}\label{cor:cocharacterprime}
If $n=p$ is prime then $\chi(g)=
\begin{cases}
(n-1)! & g\text{ the identity }\\
n-1 & g\text{ an }n\text{-cycle}\\
0 & \text{ otherwise }
\end{cases}$
\end{cor}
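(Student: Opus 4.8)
The plan is to derive Corollary~\ref{cor:cocharacterprime} as a direct specialization of Theorem~\ref{thm:cocharacter}, checking each of the three cases against the general formula. The key observation is that when $n=p$ is prime, the only positive divisors are $d=1$ and $d=p$, so the set $D$ collapses to just two elements, and correspondingly only two conjugacy classes of $S_n$ can contribute a nonzero character value; everything else must be $0$ by the second case of the theorem.

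First I would handle the divisor $d=1$, with complement $e=\frac{n}{1}=n$. The conjugacy class $C^e_d=C^n_1$ consists of elements that are a product of $n$ disjoint $1$-cycles, i.e.\ the identity. Plugging into the formula gives $\chi(g)=\frac{e!\,d^e}{n}\phi(d)=\frac{n!\cdot 1^n}{n}\phi(1)=\frac{n!}{n}=(n-1)!$, using $\phi(1)=1$. This recovers the first line of the corollary and also agrees with the fact that the identity fixes all $(n-1)!$ cyclic orders. Next I would take $d=p=n$, with complement $e=\frac{n}{n}=1$. Here $C^e_d=C^1_p$ is the class of a single $p$-cycle (an $n$-cycle). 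The formula yields $\chi(g)=\frac{1!\cdot p^1}{p}\phi(p)=\phi(p)$, and since $p$ is prime $\phi(p)=p-1=n-1$, giving the second line.

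The remaining step is to argue that every other conjugacy class gives $\chi(g)=0$. This is immediate from the structure of Theorem~\ref{thm:cocharacter}: a class contributes a nonzero value only if it has the special form $C^e_d$ for some divisor $d\mid n$, and for $n=p$ prime the only such classes are the two already treated. Any element of $S_p$ whose cycle type is not either all fixed points or a single $p$-cycle falls into the ``otherwise'' case and contributes $0$, which covers the third line.

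I do not expect any real obstacle here, since the corollary is purely a matter of substituting $d\in\{1,p\}$ into the master formula and invoking $\phi(1)=1$ and $\phi(p)=p-1$. The only point requiring any care is making explicit that primality of $n$ is precisely what forces $D=\{1,p\}$, so that no intermediate divisor survives to produce a fourth case; this is the single conceptual input, and it is elementary. Everything else is arithmetic verification against Theorem~\ref{thm:cocharacter}.
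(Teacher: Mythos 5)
Your proof is correct and matches the paper's (implicit) reasoning exactly: the corollary is stated as an immediate specialization of Theorem~\ref{thm:cocharacter}, obtained by noting that a prime $n=p$ has only the divisors $d=1$ and $d=p$ and substituting $\phi(1)=1$ and $\phi(p)=p-1$. Nothing further is needed.
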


\begin{proof}[Proof of Theorem \ref{thm:cocharacter}]
This is an easy 
exercise using standard induced character facts, but it is worth making an explicit calculation of the number of fixed points.

For a cyclic order $x$ to remain fixed under an element of $S_n$, all $A_i$ must be moved by the same amount, modulo $n$, in one of the underlying permutations representing $x$.  This immediately rules out any $g$ with more than one cycle type in the decomposition fixing any $x$, so assume $g$ has type $C^e_d$.

Without loss of generality let $g$ have the form $$(1\cdots \ell_{1,d})(2\cdots \ell_{2,d})\cdots (e\cdots \ell_{e,d})=g_1g_2\cdots g_e\; .$$  A typical cyclic order fixed by this has the form $(12\cdots e\ell_{1,2}\ell_{2,2}\cdots\ell_{e,d})$ so that each of the $d$-cycles are intertwined.  However, it is evident that other cyclic orders may be fixed, such as $(213\cdots e\ell_{2,2}\ell_{1,2}\ell_{3,2}\cdots \ell_{e,d})$.  In addition, the same cycle will be fixed if we replace a $d$-cycle $g_1=(1\cdots \ell_{1,d})$ by $(1\ell_{1,d}\ell_{1,d-1}\cdots \ell_{1,2})$, or any of the $\phi(d)$ powers of the cycle which preserve the cycle structure, so \emph{mutatis mutandis} there are $\phi(d)$ (cyclic!) orders of the elements of $g_1$ fixed by $g_1$ itself.  We now use a basic counting argument based on these observations.

Keeping the order of each $d$-cycle, we have $n=de$ choices of where to put the first element of the $d$-cycle $g_1$ among the elements of $x$.  In addition, we have $\phi(d)$ orderings of the elements of $g_1$ to choose from in constructing $x$.  For each of the remaining $d$-cycles $\{g_i\}_{i=2}^e$, we have $n-d(i-1)=d(e-i+1)$ positions of $x$ left to put the first element of $g_i$.  However, whichever of the $\phi(d)$ orderings of $g_1$ was chosen must now be chosen for $g_i$ as well, or $g$ will not \emph{fix} $x$, only the parts of $x$ corresponding to each $g_i$.  Finally, we have of course overcounted, because each of these $x$ is equivalent to $n$ other cyclic orders with the same properties.   Factoring out the powers of $d$ finishes the computation.
\end{proof}

We can reprove Proposition \ref{prop:4matrix} and Corollary \ref{cor:4subspace} from this point of view, which will prepare us for similar results in Section \ref{section:cyclic5}.  We first compute directly and with character tables for this proposition:

\begin{prop}\label{prop:co4decomp}
The character $\chi_{\mathbb{Q}CO_4}=\chi_4$ has values $\chi_4(e)=6$, \\$\chi_4((AB)(CD))=2$, $\chi_4((ABCD))=2$, and is zero otherwise.  Further, 
$$\mathbb{Q}CO_4\simeq S^{(4)}\oplus S^{(2,2)}\oplus S^{(2,1,1)}$$ of dimensions $1$, $2$, and $3$, respectively.
\end{prop}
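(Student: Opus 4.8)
The plan is to establish the two halves of the statement in turn: first the explicit character values, then the multiplicities of the irreducibles via inner products of characters.

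For the character values, I would specialize Theorem~\ref{thm:cocharacter} to $n=4$. The positive divisors are $d\in\{1,2,4\}$ with complements $e\in\{4,2,1\}$, corresponding respectively to the identity class $C_1^4$, the double-transposition class $C_2^2$ (typified by $(AB)(CD)$), and the four-cycle class $C_4^1$ (typified by $(ABCD)$). Plugging into $\frac{e!\,d^e}{n}\phi(d)$ gives $6$, $2$, and $2$ respectively, and the theorem's ``otherwise'' case forces $\chi_4=0$ on the two remaining classes of $S_4$ --- the single transpositions (type $(2,1,1)$) and the three-cycles (type $(3,1)$) --- since these mix cycle lengths and so fix no cyclic order. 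One could equally verify each value by directly counting, among the six orders of Figure~\ref{figure:cyclic4}, those left fixed by a representative of each class.

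For the decomposition, recall from Proposition~\ref{prop:characterfacts} that the multiplicity of $S^\lambda$ in $\mathbb{Q}CO_4$ equals the inner product $(\chi_\lambda,\chi_4)$, and that characters are class functions, so each such inner product is a weighted sum over the five conjugacy classes of $S_4$ with weights $|K|/|S_4|$ (the class sizes being $1,6,3,8,6$). I would record $\chi_4$ as $(6,0,2,0,2)$ on the classes ordered as $e,(AB),(AB)(CD),(ABC),(ABCD)$, lay the standard $S_4$ character table beside it, and compute $(\chi_\lambda,\chi_4)$ for each $\lambda\vdash 4$. The arithmetic returns multiplicity $1$ for $\lambda\in\{(4),(2,2),(2,1,1)\}$ and $0$ for $\lambda\in\{(3,1),(1,1,1,1)\}$, yielding the claimed isomorphism; by Proposition~\ref{prop:Specht} these Specht modules have dimensions $1$, $2$, $3$.

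There is no deep obstacle here --- once Theorem~\ref{thm:cocharacter} is in hand the argument is a finite calculation. The only points demanding care are transcribing the $S_4$ character table correctly (in particular the signs on the two three-dimensional irreducibles $S^{(3,1)}$ and $S^{(2,1,1)}$, which are precisely what distinguish the two inner products that might otherwise be conflated) and weighting by the correct class sizes. As a built-in check, the dimensions of the summands sum to $1+2+3=6=\dim\mathbb{Q}CO_4$, and $\chi_4(e)=6$ agrees with $(n-1)!$; either a failed dimension count or a nonintegral inner product would immediately flag an error in the table or the weights.
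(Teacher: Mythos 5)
Your proposal is correct and follows essentially the same route as the paper, which simply notes that the result is obtained ``directly and with character tables'': Theorem~\ref{thm:cocharacter} specialized to $n=4$ gives $\chi_4=(6,0,2,0,2)$ on the classes $e,(AB),(AB)(CD),(ABC),(ABCD)$, and the inner products against the standard $S_4$ character table (with class sizes $1,6,3,8,6$) yield multiplicities $1,0,1,1,0$ for $(4),(3,1),(2,2),(2,1,1),(1^4)$ exactly as you compute. The arithmetic and the dimension check $1+2+3=6$ all verify.
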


The subspaces in Corollary \ref{cor:4subspace} we have already seen to be invariant under $S_4$, so by uniqueness of decomposition we can restate that result (using Schur's Lemma) as follows.

\begin{cor}\label{cor:co4decomp}
Any neutral points-based rule $f$ on $CO_4$ is determined by three scalars $t,u,v$ such that if $\p\in S^{(4)}\oplus S^{(2,2)}\oplus S^{(2,1,1)}$ is $\p=\p_1+\p_2+\p_3$, then $$f(\p)=t\p_1+u\p_2+v\p_3\; .$$
\end{cor}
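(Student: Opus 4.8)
The plan is to leverage the fact, already established in this section, that any neutral points-based rule $f$ defines a $\mathbb{Q}S_4$-module endomorphism of $\mathbb{Q}CO_4$, together with the multiplicity-free decomposition $\mathbb{Q}CO_4\simeq S^{(4)}\oplus S^{(2,2)}\oplus S^{(2,1,1)}$ from Proposition \ref{prop:co4decomp}. The entire content of the corollary is that such an endomorphism must act as a single scalar on each irreducible summand; the writing $\p=\p_1+\p_2+\p_3$ is exactly the projection of $\p$ onto these three summands, which exists and is unique by uniqueness of the decomposition into irreducibles.

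First I would fix the three summands $V_1=S^{(4)}$, $V_2=S^{(2,2)}$, $V_3=S^{(2,1,1)}$, and let $\pi_j:\mathbb{Q}CO_4\to V_j$ be the projections afforded by this direct sum. For each pair $(i,j)$ I would consider the composite $\pi_j\circ f|_{V_i}:V_i\to V_j$, which is a $\mathbb{Q}S_4$-module homomorphism between irreducibles. By the version of Schur's Lemma for $S_n$ recalled above, this composite is zero whenever $V_i\not\simeq V_j$, and is multiplication by a scalar in $\mathbb{Q}$ when $i=j$. Since the partitions $(4)$, $(2,2)$, $(2,1,1)$ are distinct, Proposition \ref{prop:Specht} guarantees that the corresponding Specht modules are pairwise non-isomorphic, so every off-diagonal composite vanishes. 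Consequently $f(V_i)\subseteq V_i$ and $f$ acts on $V_i$ as multiplication by a single scalar; call these scalars $t,u,v$ for $i=1,2,3$ respectively. For an arbitrary profile written as $\p=\p_1+\p_2+\p_3$ with $\p_j\in V_j$, linearity then yields $f(\p)=t\p_1+u\p_2+v\p_3$, which is the claim.

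The one point deserving care — and the place where the multiplicity-free nature of the decomposition is essential — is the step ruling out mixing. If any irreducible appeared with multiplicity greater than one, Schur's Lemma would only force $f$ to act by an invertible matrix on the corresponding isotypic component rather than by a single scalar, and the clean three-parameter description would collapse. Here, because each of $S^{(4)}$, $S^{(2,2)}$, $S^{(2,1,1)}$ occurs exactly once, the endomorphism algebra $\operatorname{End}_{\mathbb{Q}S_4}(\mathbb{Q}CO_4)$ reduces to $\mathbb{Q}\times\mathbb{Q}\times\mathbb{Q}$, and $f$ is pinned down by the single triple $(t,u,v)$. I expect this multiplicity check to be the main (and essentially the only) conceptual obstacle; everything else is a direct application of Schur's Lemma and uniqueness of decomposition.

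Finally, to connect back to the elementary treatment, I would identify the three eigenvalues with the dilation factors of Corollary \ref{cor:4subspace} by matching each summand to the spanning set given there — $S^{(4)}$ with $(1,1,1,1,1,1)^T$, $S^{(2,2)}$ with the two-dimensional span, and $S^{(2,1,1)}$ with the three-dimensional span — so that $t=a+b+4c$, $u=a+b-2c$, and $v=a-b$. This last identification is routine and requires only confirming that the invariant subspaces of Corollary \ref{cor:4subspace} are precisely the Specht submodules, which follows since they are $S_4$-invariant of the correct dimensions $1$, $2$, $3$ and the decomposition is unique.
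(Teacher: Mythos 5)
Your argument is correct and matches the paper's approach: the paper likewise identifies the invariant subspaces of Corollary \ref{cor:4subspace} with the three pairwise non-isomorphic Specht summands of Proposition \ref{prop:co4decomp} via uniqueness of decomposition, and then applies Schur's Lemma to obtain one scalar per summand. The only (inconsequential) slip is in your counterfactual aside: were an irreducible to occur with multiplicity greater than one, Schur's Lemma would permit an arbitrary, not necessarily invertible, matrix on that isotypic component.
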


  Further, we would have in Proposition \ref{prop:4matrix} that $t=a+b+4c$, $u=a+b-2c$, and $v=a-b$; solving for $a,b,c$ yields $$a = \frac{1}{6} \, t + \frac{1}{3} \, u + \frac{1}{2} \, v, b = \frac{1}{6} \, t + \frac{1}{3} \, u - \frac{1}{2} \, v, c = \frac{1}{6} \, t - \frac{1}{6} \, u$$ if one wanted to recreate the matrix in terms of these scalars instead.

Finally, we wish to recall one final very useful concept from \cite{OrrisonSymmetry}.  In any $\mathbb{Q}G$-module homomorphism, we have seen that the kernel is itself a $\mathbb{Q}G$-module.  However, for the purposes of voting theory this simply corresponds to the space of profiles that goes to a complete tie of zero points -- certainly a useful space, but we might be interested in what point totals are \emph{possible}, not just impossible.  

\begin{definition}\label{defn:effective}
Given a neutral points-based voting rule (or indeed a $\mathbb{Q}G$-module homomorphism) $f$ with (profile) domain $\mathbb{Q}B$, let the kernel be $K_f$.  Then call the orthogonal complement (using the usual inner product on $\mathbb{Q}^{\text{card}(B)}$) to $K_f$ the \emph{effective space of $f$}, and denote it by $E_f$.
\end{definition}

While any profile vector not in the kernel will go to some output vector other than a complete zero tie, any such profiles $\p$ can be written as $\p={\bf k} +{\bf e}$, where ${\bf k}\in K_f$ and ${\bf e}\in E_f$ are orthogonal.  Then $f(\p)=f({\bf e})$, so in some very real sense $E_f$ contains all the information there is to know.  Moreover, as we have seen in Corollary \ref{prop:co4decomp}/Corollary \ref{cor:4subspace}, any element of $E_f$ can be written as a sum of vectors which are then sent by scalar multiplication to vectors in the outcome space.  Identifying this space would therefore seem to be highly useful in constructing examples or proving theorems about any given $f$, or family of $f$.

\section{Meaningful ballots for $n=4$}\label{section:4ballots}

We now harness this symmetry to analyze more complex ballots.  Because the general results are pretty abstract, we will motivate them with some much more explicit computations.

\subsection{An in-depth example}

To start us off, consider Scenario \ref{scen:poker}.  There is a particularly nice setup here, because we can clearly identify a small subset of the many possible pieces of a cyclic order of greatest relevance to a voter.

First, each voter has a specific player $A_v$ he desires to win.  In addition, it seems reasonable that this voting public will be knowledgeable enough that each voter will be able to identify a strongest competitor $A_s$ and weakest competitor $A_w$ (neither the same as $A_v$).  In that case, allowing for a ballot that specifies the two adjacent players to a desired winner seems very reasonable (and certainly requires much less information than a full ranking of all cyclic orders).

\begin{definition}\label{defn:ROLOballot}
We define a \emph{ROLO ballot} to be a ballot that specifies a player, along with one to the Right Of and one to the Left Of this player.  We use the notation $\ROLO{A}{D,C}$, where we interpret this as desiring $A$ to have $D$ to its right and $C$ to its left, so that a generic cyclic order would look something like $\CO{DAC}{X\cdots Y}$.
\end{definition}

In the poker example, a good ballot to submit would be $\ROLO{A_v}{A_s,A_w}$.  However, it should be evident that the same ballot could be profitably used in many settings.  Note that there are $n(n-1)(n-2)$ ballots compared to $(n-1)!$ cyclic orders, which means that other than for $n\leq 6$ there are many more cyclic orders than possible ballots.   In this paper we will restrict ourselves to $n=4$, where there are already $24$ ballot types, so things are not dull.  

Let's apply Definition \ref{defn:points-based-general} to this situation.  Here, $B$ is the set of ROLO ballots, so $\mathbb{Q}B\simeq \mathbb{Q}^{24}$ as vector spaces.  Then once we devise a neutral scoring function $s:B\times CO_n\to \mathbb{Q}$, we will automatically have a new neutral points-based system to explore.

\begin{example}
Here is a system simple to describe informally, which is perhaps analogous to Example \ref{ex:rule210}.  Since an element of $B$ in this case completely describes a favorite cyclic order, given a ballot $b\in B$ we could give two points to that order, \emph{one} point to any other order that shares either the right or the left, and then zero points to everything else\footnote{Another way to think of this is that one point is awarded from $A\{D,C\}$ to any order with $AC$ in that order, and one point to any order with $DA$ in that order.}.   For example, a vote for $\ROLO{A}{D,C}$ would count two points to $(ACBD)$/$\CO{AC}{DB}$ and one point each for $(ACDB)$/$\CO{AC}{BD}$ and $(ABCD)$/$\CO{AB}{DC}$.  We will call this system ROLO(2,1).
\end{example}

Before analyzing this further, we will need to fix an order for the ballots.  Recall from Figure \ref{figure:cyclic4} the order given to $CO_4$ is\footnote{To keep the right and left in view easier one may recall $\CO{AC}{DB},\CO{AD}{CB},\CO{AB}{DC},\CO{AD}{BC},\CO{AB}{CD},\CO{AC}{BD}$.} $(ACBD)$, $(ADBC)$, $(ABCD)$, $(ADCB)$, $(ABDC)$, $(ACDB)$.  We will give the ROLO ballots the following order\footnote{The cognoscenti will later point out, when we see the group action explicitly, that the order could be even more symmetric, but redoing the various vectors and matrices involved is very error-prone, so we are sticking with this order!}, which has the four ballots corresponding to each cyclic order placed together.

\begin{figure}[H]
$$\begin{array}{ccccc}
\ROLO{A}{D,C} & \ROLO{B}{C,D} & \ROLO{D}{B,A} & \ROLO{C}{A,B}\\
\ROLO{C}{B,A} & \ROLO{D}{A,B} & \ROLO{A}{C,D} & \ROLO{B}{D,C}\\
\ROLO{A}{D,B} & \ROLO{C}{B,D} & \ROLO{B}{A,C} & \ROLO{D}{C,A}\\
\ROLO{B}{C,A} & \ROLO{D}{A,C} & \ROLO{C}{D,B} & \ROLO{A}{B,D}\\
\ROLO{D}{B,C} & \ROLO{A}{C,B} & \ROLO{B}{A,D} & \ROLO{C}{D,A}\\
\ROLO{C}{A,D} & \ROLO{B}{D,A} & \ROLO{D}{C,B} & \ROLO{A}{B,C}\\
\end{array}$$
\caption{ROLO ballots for $n=4$}
\label{figure:ROLO4}
\end{figure}

With this in mind, here is the full matrix for the points-based system ROLO(2,1).

\setcounter{MaxMatrixCols}{24}
\begin{figure}[H]
$$\begin{pmatrix}
2 & 2 & 2 & 2 & 0 & 0 & 0 & 0 & 1 & 0 & 0 & 1 & 1 & 0 & 1 & 0 & 1 & 0 & 1 & 0 & 1 & 0 & 0 & 1\\
0 & 0 & 0 & 0 & 2 & 2 & 2 & 2 & 0 & 1 & 1 & 0 & 0 & 1 & 0 & 1 & 0 & 1 & 0 & 1 & 0 & 1 & 1 & 0\\
1 & 0 & 1 & 0 & 1 & 0 & 0 & 1 & 2 & 2 & 2 & 2 & 0 & 0 & 0 & 0 & 0 & 1 & 1 & 0 & 1 & 0 & 1 & 0\\
0 & 1 & 0 & 1 & 0 & 1 & 1 & 0 & 0 & 0 & 0 & 0 & 2 & 2 & 2 & 2 & 1 & 0 & 0 & 1 & 0 & 1 & 0 & 1\\
0 & 1 & 1 & 0 & 1 & 0 & 1 & 0 & 1 & 0 & 1 & 0 & 0 & 1 & 1 & 0 & 2 & 2 & 2 & 2 & 0 & 0 & 0 & 0\\
1 & 0 & 0 & 1 & 0 & 1 & 0 & 1 & 0 & 1 & 0 & 1 & 1 & 0 & 0 & 1 & 0 & 0 & 0 & 0 & 2 & 2 & 2 & 2
\end{pmatrix}$$
\caption{ROLO(2,1) matrix for $n=4$}
\label{figure:ROLO21matrix}
\end{figure}

The reader will immediately note a high degree of symmetry.  That symmetry comes from the comment, `the four ballots corresponding to each cyclic order,' and will be fully utilized momentarily. For now, let's see two representative results for ROLO(2,1).  First, a `paradox.'

\begin{example}\label{ex:ROLOparadox}
Suppose we have an electorate of about four thousand, distributed in the following profile:
$$(141, 141, 141, 141, 73, 313, 133, 253, 133, 159, 99, 193, 223, 9, 103, 129, 193, 159, 133, 219, 163, 9, 9, 163)^T$$
About forty percent of the voters would like $(ACDB)$ or its reversal, and another third prefer $(ABCD)$ or its reversal.  However, under the ROLO(2,1) system, \emph{all four} of those options tie for last place, and $(ACBD)$ is the winning cyclic order by nearly one hundred scoring points!
\end{example}

To be sure, we do not know of any axioms for voting on cyclic orders that would cause this to be `paradoxical' in the usual voting sense.  At the same time, granted that we are currently only assuming neutrality and anonymity (not even a Pareto or unanimity axiom for the purposes of this paper!), this outcome seems troubling.  This is especially so since we went out of our way with ROLO ballots to not simply use a voting system equivalent to plurality.

On the positive side of the ledger, we have results like this.
\begin{example}\label{ex:ROLOtiespace}
Pick a cyclic order.  A profile differential over ROLO ballots with three voters for each ballot corresponding to that cyclic order, and negative three for each one corresponding to its reversal, yields an outcome vector of $24$ points for the order and $-24$ for its reversal.

This is unsurprising.  Moreover, we get a similar profile (differential), with an identical result, if we simply subtract the row of the matrix in Figure \ref{figure:ROLO21matrix} corresponding to the reversal of the chosen cyclic order from the row corresponding to the chosen one.

However, linearity immediately then implies that all profiles of forms like $$(1, 1, 1, 1, -1, -1, -1, -1, -1, 1, 1, -1, 1, -1, 1, -1, 1, -1, 1, -1, -1, 1, 1, -1)^T$$ go to a complete zero tie.  That is to say, ROLO(2,1) considers a profile (differential), preferring ballots coming from a given cyclic order exactly as much as it \emph{dislikes} ballots sharing only one adjacency with that same cyclic order, to be irrelevant to the scores.  
\end{example}

In some sense, this profile cancels out the points ROLO(2,1) gives to the two adjacencies of each possible ballot, as there are exactly the same number of each adjacency, positive and negative.  Again, without intuition about which profiles \emph{should} cancel out, this may or may not be controversial to assume as an axiom.  At the very least, though, it is a positive statement of something that is only obvious in retrospect about the rule.

\subsection{The regular representation}

Let's return to the symmetry of Figure \ref{figure:ROLO21matrix}.  What is it?  There are four elements of $S_4$ which fix each cyclic order, and we have already seen that these are the cosets in question defining cyclic orders.  That means that the action of $S_4$ on $B$ is isomorphic to that of $S_4$ \emph{on itself}, which is the definition (when linearized to vector spaces) of the regular representation.  Now let's apply Proposition \ref{prop:Specht} in this well-known case.

\begin{prop}\label{prop:rolodecomp}
The space $\mathbb{Q}B$ for ROLO ballots decomposes as $$S^{(4)}\oplus S^{(3,1)^{\oplus^3}}\oplus S^{(2,2)^{\oplus^2}}\oplus S^{(2,1,1)^{\oplus^3}}\oplus S^{(1,1,1,1)}\; .$$
\end{prop}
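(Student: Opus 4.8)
The plan is to recognize $\mathbb{Q}B$ as the regular representation $\mathbb{Q}S_4$ and then invoke the standard decomposition of the regular representation. The text already signals this: there are exactly $n(n-1)(n-2)=24$ ROLO ballots when $n=4$, which matches $|S_4|=24$. So the central step is to show that the $S_4$-action on the set $B$ of ROLO ballots is \emph{simply transitive} (free and transitive), which forces $\mathbb{Q}B\simeq\mathbb{Q}S_4$ as $\mathbb{Q}S_4$-modules.

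To do this, I would first pin down the action explicitly. Following the left action of Remark~\ref{remark:symmetricaction} (permuting \emph{elements}), $\sigma$ sends the ballot $\ROLO{A}{D,C}$ to $\ROLO{\sigma(A)}{\sigma(D),\sigma(C)}$, since the ballot records which players occupy the chosen spot and its two neighbors. Freeness is then immediate: if $\sigma$ fixes $\ROLO{A}{D,C}$ it must fix each of the three named players $A,D,C$, and since only one player ($B$) remains, $\sigma$ fixes $B$ as well, so $\sigma=e$. More generally the stabilizer of any ROLO ballot consists of the permutations fixing its three named players, a subgroup of order $(n-3)!$; the point special to $n=4$ is that $(4-3)!=1$. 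By orbit--stabilizer every orbit then has size $|S_4|=24$, and since $|B|=24$ there is a single orbit, so the action is simply transitive.

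Equivalently --- and this may be the cleaner route to write up, since it parallels the character computations of Section~\ref{section:repthry} --- I could compute $\chi_{\mathbb{Q}B}$ directly. The value $\chi_{\mathbb{Q}B}(g)$ is the number of ROLO ballots fixed by $g$, which is the number of ballots all of whose three named players are fixed by $g$. An element of $S_4$ fixing three or more points is the identity, so $\chi_{\mathbb{Q}B}(e)=24$ and $\chi_{\mathbb{Q}B}(g)=0$ for $g\neq e$, which is exactly the character of the regular representation $\mathbb{Q}S_4$.

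Either way, once $\mathbb{Q}B\simeq\mathbb{Q}S_4$ is established, the conclusion follows from the well-known fact (Proposition~\ref{prop:characterfacts} applied with the regular character) that in the regular representation each irreducible $S^\lambda$ occurs with multiplicity equal to its dimension: indeed $(\chi_{\mathbb{Q}B},\chi_\lambda)=\tfrac{1}{24}\cdot 24\cdot\chi_\lambda(e)=\dim S^\lambda$. For $S_4$ the dimensions are $1,3,2,3,1$ for $\lambda=(4),(3,1),(2,2),(2,1,1),(1,1,1,1)$ respectively (a quick hook-length check), yielding precisely the stated decomposition. I do not expect a genuine obstacle here; the only point requiring care is confirming freeness of the action --- equivalently, that no nonidentity element of $S_4$ fixes three points --- which is exactly the feature special to $n=4$ that makes this the regular representation rather than a larger permutation module, and explains the parenthetical remark in the text that for $n\leq 6$ the behavior differs.
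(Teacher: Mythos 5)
Your proof is correct and takes essentially the same route as the paper, which likewise identifies $\mathbb{Q}B$ with the regular representation $\mathbb{Q}S_4$ and reads off the standard decomposition with multiplicities equal to the dimensions $1,3,2,3,1$. The only difference is that you explicitly verify simple transitivity (via trivial stabilizers or the fixed-point character), a step the paper asserts more informally by matching the four ballots per cyclic order with the four-element cosets defining $CO_4$.
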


From Proposition \ref{prop:co4decomp} we then have a complete characterization of \emph{all} neutral points-based procedures on ROLO ballots!
\begin{prop}
The scoring part of any neutral points-based voting rule for $n=4$ from ROLO ballots to $CO_4$ is a $\mathbb{Q}S_4$-homomorphism
$$S^{(4)}\oplus S^{(3,1)^{\oplus^3}}\oplus S^{(2,2)^{\oplus^2}}\oplus S^{(2,1,1)^{\oplus^3}}\oplus S^{(1,1,1,1)} \to S^{(4)}\oplus S^{(2,2)}\oplus S^{(2,1,1)}$$
\end{prop}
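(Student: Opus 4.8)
The plan is to assemble three facts already established; the statement is essentially a bookkeeping consequence, so no genuinely new argument is needed. First I would recall, from the discussion following Definition \ref{defn:points-based-general}, that whenever the ballot set $B$ carries an $S_n$-action and the scoring function $s$ is neutral, the scoring part $f$ of the rule is a $\mathbb{Q}S_n$-module homomorphism $f:\mathbb{Q}B\to\mathbb{Q}CO_n$. For ROLO ballots the set $B$ does carry the natural $S_4$-action relabelling the four players, so this applies verbatim with $n=4$, and $f$ is a $\mathbb{Q}S_4$-homomorphism $\mathbb{Q}B\to\mathbb{Q}CO_4$.

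The second step is simply to substitute the two decompositions already in hand into the source and target of $f$. The domain is governed by Proposition \ref{prop:rolodecomp}, which identifies $\mathbb{Q}B$ with the regular representation and hence gives $\mathbb{Q}B\simeq S^{(4)}\oplus S^{(3,1)^{\oplus^3}}\oplus S^{(2,2)^{\oplus^2}}\oplus S^{(2,1,1)^{\oplus^3}}\oplus S^{(1,1,1,1)}$; the codomain is governed by Proposition \ref{prop:co4decomp}, which gives $\mathbb{Q}CO_4\simeq S^{(4)}\oplus S^{(2,2)}\oplus S^{(2,1,1)}$. Plugging these isomorphisms in yields exactly the displayed map, which finishes the proof.

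Because the argument is pure assembly, the genuine work lies upstream rather than here: the substantive inputs were identifying $\mathbb{Q}B$ as the regular representation and the character computation behind Proposition \ref{prop:co4decomp}. If I were to flag the one point deserving care, it would be confirming that the $S_4$-action on ROLO ballots really is the \emph{regular} (free and transitive) action, not merely a transitive one of the correct cardinality $24$. This is secured by checking that the stabilizer of any single ballot, say $\ROLO{A}{D,C}$, is trivial: fixing the three distinct players $A,C,D$ forces the fourth to be fixed as well, so the stabilizer is $\{e\}$, which together with transitivity gives the regular action and hence the stated multiplicities.

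The payoff I would note immediately afterward, and which motivates phrasing the proposition in this isotypic form, is that Schur's Lemma now makes the space of all such rules transparent: the three copies of $S^{(3,1)}$ and the copy of $S^{(1,1,1,1)}$ in the domain have no matching isotypic component in $\mathbb{Q}CO_4$ and are therefore forced into the kernel, while the behaviour on the remaining components is controlled by the multiplicities $1,2,3$ of $S^{(4)},S^{(2,2)},S^{(2,1,1)}$ in the domain against multiplicity one each in the codomain. This is the route I would take toward the ensuing characterization, though none of it is required to prove the present statement.
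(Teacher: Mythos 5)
Your proposal is correct and matches the paper's route exactly: the paper treats this proposition as an immediate assembly of the general observation that neutral scoring functions induce $\mathbb{Q}S_n$-module homomorphisms, the identification of $\mathbb{Q}B$ with the regular representation in Proposition \ref{prop:rolodecomp}, and the decomposition of $\mathbb{Q}CO_4$ in Proposition \ref{prop:co4decomp}. Your explicit check that the stabilizer of $\ROLO{A}{D,C}$ is trivial is a slightly cleaner justification of the regular-representation claim than the paper's remark about cosets, but it is the same argument in substance.
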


\begin{cor}
Thus the kernel of \emph{any} such system is a superspace of the ten dimensions of $S^{(3,1)^{\oplus^3}}\oplus S^{(1,1,1,1)}$ direct summed with an eight-dimensional submodule isomorphic to $S^{(2,2)}\oplus S^{(2,1,1)^{\oplus^2}}$, while the effective space is isomorphic to an invariant subspace of $S^{(4)}\oplus S^{(2,2)}\oplus S^{(2,1,1)}$.  We may describe it fully with six scalars, which represent the scaling factors for the subspaces possibly not in the kernel $S^{(4)}\oplus S^{(2,2)^{\oplus^2}}\oplus S^{(2,1,1)^{\oplus^3}}$.
\end{cor}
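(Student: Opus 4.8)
The plan is to read the statement directly off Schur's Lemma applied to the two isotypic decompositions already established: Proposition~\ref{prop:rolodecomp} for the domain $\mathbb{Q}B$ and Proposition~\ref{prop:co4decomp} for the codomain $\mathbb{Q}CO_4$. Since $f$ is a $\mathbb{Q}S_4$-module homomorphism and $\mathbb{Q}$ is a splitting field for $S_4$, Schur's Lemma gives $\dim\mathrm{Hom}_{\mathbb{Q}S_4}(S^\lambda,S^\mu)=1$ when $\lambda=\mu$ and $0$ otherwise. Hence $f$ must respect isotypic type, carrying the $\lambda$-isotypic component of the domain into the $\lambda$-isotypic component of the codomain and annihilating it whenever no target of that type exists. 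So I would analyze $f$ one partition at a time.

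First I would dispose of the types with no target. The partitions $(3,1)$ and $(1,1,1,1)$ occur in $\mathbb{Q}B$ but not in $\mathbb{Q}CO_4$, so $f$ must kill their full isotypic components, of dimensions $3\cdot 3=9$ and $1$. These are precisely the ten dimensions of $S^{(3,1)^{\oplus 3}}\oplus S^{(1,1,1,1)}$ forced into $K_f$ for every such rule.

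Next I would treat the three matching types $\lambda\in\{(4),(2,2),(2,1,1)\}$, whose multiplicities in $\mathbb{Q}B$ are $m_\lambda=1,2,3$ while each appears once in $\mathbb{Q}CO_4$. The restriction of $f$ to the $\lambda$-isotypic component is a homomorphism $S^{\lambda^{\oplus m_\lambda}}\to S^\lambda$, which by Schur is recorded by exactly $m_\lambda$ scalars, one per copy; summing yields the $1+2+3=6$ scalars that parametrize the complementary summand $S^{(4)}\oplus S^{(2,2)^{\oplus 2}}\oplus S^{(2,1,1)^{\oplus 3}}$ of the domain. The one computation needing genuine care is the kernel of such a multiplicity map: a nonzero homomorphism $S^{\lambda^{\oplus m}}\to S^\lambda$ has kernel isomorphic to $S^{\lambda^{\oplus(m-1)}}$. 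This follows from complete reducibility in characteristic zero, since the kernel is a submodule of an isotypic module and a dimension count leaves $(m-1)\dim S^\lambda$. Applying this with $m=1,2,3$ contributes nothing from $(4)$, one copy of $S^{(2,2)}$, and two copies of $S^{(2,1,1)}$ to the kernel, i.e. the eight-dimensional submodule $S^{(2,2)}\oplus S^{(2,1,1)^{\oplus 2}}$.

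Finally, for the effective space $E_f=K_f^\perp$ of Definition~\ref{defn:effective}, I would observe that $S_4$ acts on $\mathbb{Q}B$ by permutation (hence orthogonal) matrices, so the standard inner product is $S_4$-invariant and $K_f^\perp$ is again a submodule, namely the surviving single copies of $S^{(4)}$, $S^{(2,2)}$, and $S^{(2,1,1)}$ -- an invariant subspace of $\mathbb{Q}CO_4\simeq S^{(4)}\oplus S^{(2,2)}\oplus S^{(2,1,1)}$. The words \emph{superspace} and \emph{invariant subspace} are exactly what absorbs the degenerate cases: if all scalars for a given type vanish, its whole isotypic component joins $K_f$ and drops out of $E_f$, so strict inclusions can occur and the isomorphism type of $E_f$ can only shrink. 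Apart from the kernel-of-a-multiplicity-map claim and the invariance of the inner product, the argument is just counting dimensions against the two decompositions.
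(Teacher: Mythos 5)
Your proposal is correct and follows exactly the route the paper intends: the corollary is left as an immediate consequence of Schur's Lemma applied to the two isotypic decompositions in Propositions \ref{prop:rolodecomp} and \ref{prop:co4decomp}, and your dimension counts ($9+1=10$ forced into the kernel from the types absent in $\mathbb{Q}CO_4$, $2+6=8$ from the kernels of the multiplicity maps, and $1+2+3=6$ scalars) match the statement. The one step you rightly flag as needing care --- that a nonzero map $S^{\lambda^{\oplus m}}\to S^{\lambda}$ has kernel $S^{\lambda^{\oplus(m-1)}}$, and that orthogonal complements are submodules because the action is by permutation matrices --- is handled correctly, so there is nothing to add.
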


There are several advantages to this point of view.  First off, it is much easier to find specific subspaces of interest once you know what you are looking for.  Secondly, we can not only describe how profiles will behave, but also more easily create `paradoxes', or at least unusual results.  

Before going in more depth, it will be helpful to examine the structure of Figures \ref{figure:ROLO4} and \ref{figure:ROLO21matrix} a little more closely.  As we pointed out, each ballot can be obtained from $\ROLO{A}{D,C}$ by a specific element of $S_4$ (considered as symmetries on the set $\{A,B,C,D\}$).  Less evidently, the matrix can also be obtained in the same way, by our neutrality requirement that $s(\sigma(b),\sigma(x))=\sigma(s(b,x))$.  This is evident in that each group of four columns corresponds to a coset of $\langle(ABCD)\rangle$ in $S_4$, which does not change the cyclic order that a given ballot favors, but \emph{does} cycle through the cyclic orders that share only the right-of or left-of characteristic.

We can now give far more explicit constructions both of any similar procedure (Proposition \ref{prop:QBmatrix}) \emph{and} of the space $\mathbb{Q}B$ (Figure \ref{figure:rolodecomp}).

\begin{prop}\label{prop:QBmatrix}
Any neutral points-based voting rule on a ballot of twenty-four options which is a transitive $S_4$-set to the set of cyclic orders must have a matrix of the following form:
$$\begin{pmatrix}
a & a & a & a & b & b & b & b & c & d & e & f & c & d & f & e & c & d & f & e & c & d & e & f\\
b & b & b & b & a & a & a & a & d & c & f & e & d & c & e & f & d & c & e & f & d & c & f & e\\
c & d & f & e & c & d & e & f & a & a & a & a & b & b & b & b & e & f & c & d & f & e & c & d\\
d & c & e & f & d & c & f & e & b & b & b & b & a & a & a & a & f & e & d & c & e & f & d & c\\
e & f & c & d & f & e & c & d & f & e & c & d & e & f & c & d & a & a & a & a & b & b & b & b\\
f & e & d & c & e & f & d & c & e & f & d & c & f & e & d & c & b & b & b & b & a & a & a & a
\end{pmatrix}$$
\end{prop}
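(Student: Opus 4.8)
The plan is to exploit the observation, recorded just before the statement, that the twenty-four ROLO ballots form a \emph{free} transitive $S_4$-set, so that $\mathbb{Q}B$ is a copy of the regular representation. First I would verify freeness directly: the stabilizer of $g_0 = \ROLO{A}{D,C}$ must preserve the three distinguished roles (center, right-of, left-of), hence fixes each of $A$, $C$, $D$, and therefore fixes $B$ as well; so it is trivial. Since the orbit already has $24 = |S_4|$ elements, the action is free and transitive, and every ballot $g$ has a \emph{unique} expression $g = \sigma\,g_0$ with $\sigma \in S_4$. (The same argument applies to \emph{any} transitive $24$-element $S_4$-set, as the statement requires, since transitivity on a set of size $|S_4|$ forces a trivial stabilizer.)

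Next I would use neutrality to collapse the whole matrix onto a single column. By Definition \ref{defn:points-based-general}, neutrality reads $s(\sigma g_0, h) = s(g_0, \sigma^{-1}h)$ for every cyclic order $h$, so the column of $M$ indexed by $g = \sigma g_0$ is obtained from the reference column (indexed by $g_0$) by permuting its six entries according to $h \mapsto \sigma^{-1}h$ on $CO_4$. In particular $M$ is completely determined by the six numbers in the reference column. Writing these, in the order of Figure \ref{figure:cyclic4}, as $a = s(g_0,(ACBD))$, $b = s(g_0,(ADBC))$, and $c,d,e,f$ for the remaining four orders, produces exactly the six scalars of the statement; and because the action is free (trivial stabilizer of $g_0$), neutrality imposes \emph{no} relations among them, so they are genuinely independent. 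As a representation-theoretic sanity check, Schur's Lemma together with Propositions \ref{prop:rolodecomp} and \ref{prop:co4decomp} gives $\dim \mathrm{Hom}_{S_4}(\mathbb{Q}B,\mathbb{Q}CO_4) = 1\cdot1 + 2\cdot1 + 3\cdot1 = 6$, confirming that six parameters is the correct count.

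It then remains to confirm that each entry of the displayed matrix carries the parameter predicted by $h \mapsto \sigma^{-1}h$, and I expect this bookkeeping to be the main (and essentially the only) obstacle, since it is purely computational: for each of the twenty-four representatives $\sigma$ I would compute the induced permutation of the six cyclic orders and read off the column. The work organizes itself along the blocks of Figure \ref{figure:ROLO4}. Each block of four ballots shares a favored cyclic order, and its four members are related by the order-four cyclic subgroup stabilizing that order; one checks (e.g. for $x_0 = (ACBD)$, stabilized by $\langle(ACBD)\rangle$) that the nontrivial elements fix both $x_0$ and its reversal while permuting the remaining four orders as powers of a $4$-cycle. This is precisely the pattern visible in the matrix of Proposition \ref{prop:QBmatrix}: within each block the two rows belonging to the favored order and its reversal are constant (the values $a$ and $b$), while $c,d,e,f$ are cycled among the other four rows. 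Hence it suffices to verify one block explicitly against Figure \ref{figure:ROLO4} and then transport the result to the remaining five blocks by neutrality, which completes the proof.
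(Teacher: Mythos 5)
Your proposal is correct and follows essentially the same route as the paper: both arguments use transitivity (hence, on a $24$-element set, freeness) plus neutrality to reduce the matrix to a single reference column of six independent weights, with every other column obtained by the permutation of $CO_4$ induced by the group element carrying the reference ballot to that column. The paper merely phrases the bookkeeping step as identifying the resulting column-permutations as a subgroup of $S_6$ isomorphic to $S_4$, while you add a Schur's Lemma dimension count as a consistency check; these are cosmetic differences.
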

\begin{proof}
Consider the permutations of the first column of weights which give the other columns.  The first four columns are generated by the $4$-cycle $(3546)$, and the next four by $(12)(56)$ in addition, forming a dihedral subgroup of $S_6$.  The next eight columns are achieved by a coset of that subgroup by $(135)(246)$ and the final eight columns are the coset generated by $[(135)(246)]^2=(153)(264)$.  The set of all these permutations is a subgroup of $S_6$ isomorphic to $S_4$, as can be easily checked.
\end{proof}

We can see that ROLO(2,1) has $a=2$ and $c=1=f$, otherwise zero.

In Figure \ref{figure:rolodecomp} we give the decomposition as well as brief voting-theoretic commentary.  There is \emph{not} a canonical way to distinguish among the infinitely many ways to write $S^{(2,2)^{\oplus^2}}$ or $S^{(2,1,1)^{\oplus^3}}$ as a direct sum of invariant subspaces, but our ordering of vectors shows one useful way, as we will see.  We used \cite{Sage} (in a non-automatic way) as well as computing characters to find this decomposition into orthogonal subspaces.  Bases are arranged in such a way to make it evident the symmetries are indeed the ones in question.

\begin{figure}[p]
Here is the decomposition of $\mathbb{Q}B$ as $S_4$-invariant subspaces.
\begin{itemize}
\item The trivial subspace $S^{(4)}=T$:
$$S^{(4)}=\text{span}(\left\{(1,1,1,1,1,1,1,1,1,1,1,1,1,1,1,1,1,1,1,1,1,1,1,1)^T\right\})$$
\item These four vectors span $S^{(2,2)^{\oplus^2}}$, and $\text{span}(\{v_1,v_2\})$ is an $S^{(2,2)}$ emphasizing one cyclic order/reversal pair and de-emphasizing the others.  
\begin{multline*}
v_1=(2, 2, 2, 2, 2, 2, 2, 2, -1, -1, -1, -1, -1, -1, -1, -1, -1, -1, -1, -1, -1, -1, -1, -1)^T,\\
v_2=(-1, -1, -1, -1, -1, -1, -1, -1, 2, 2, 2, 2, 2, 2, 2, 2, -1, -1, -1, -1, -1, -1, -1, -1)^T,\\
v_3=(2, 2, -2, -2, 2, 2, -2, -2, 1, 1, -1, -1, 1, 1, -1, -1, -1, -1, 1, 1, -1, -1, 1, 1)^T,\\
v_4=(1, 1, -1, -1, 1, 1, -1, -1, 2, 2, -2, -2, 2, 2, -2, -2, 1, 1, -1, -1, 1, 1, -1, -1)^T
\end{multline*}

\item These vectors span $S^{(2,1,1)^{\oplus^3}}$.   We will call each set $\{w_{i1},w_{i2},w_{i3}\}$ a \emph{triplet}; the spaces spanned by each set of three are both $S_4$-invariant and mutually orthogonal.  Note $\text{span}(\{w_{11},w_{12},w_{13}\})\simeq S^{(2,1,1)}$ is a subspace of vectors favoring a cyclic order strongly, and disfavoring its reversal strongly.
\begin{multline*}
w_{11}=(1,1,1,1,-1,-1,-1,-1,0,0,0,0,0,0,0,0,0,0,0,0,0,0,0,0)^T,\\
w_{12}=(0,0,0,0,0,0,0,0,1,1,1,1,-1,-1,-1,-1,0,0,0,0,0,0,0,0)^T,\\
w_{13}=(0,0,0,0,0,0,0,0,0,0,0,0,0,0,0,0,1,1,1,1,-1,-1,-1,-1)^T,\\
w_{21}=(0,0,0,0,0,0,0,0,1,-1,0,0,1,-1,0,0,1,-1,0,0,1,-1,0,0)^T,\\
w_{22}=(1,-1,0,0,1,-1,0,0,0,0,0,0,0,0,0,0,0,0,1,-1,0,0,1,-1)^T,\\
w_{23}=(0,0,1,-1,0,0,1,-1,0,0,1,-1,0,0,1,-1,0,0,0,0,0,0,0,0)^T,\\
w_{31}=(0,0,0,0,0,0,0,0,0,0,-1,1,0,0,1,-1,0,0,1,-1,0,0,-1,1)^T,\\
w_{32}=(0,0,1,-1,0,0,-1,1,0,0,0,0,0,0,0,0,-1,1,0,0,1,-1,0,0)^T,\\
w_{33}=(-1,1,0,0,1,-1,0,0,1,-1,0,0,-1,1,0,0,0,0,0,0,0,0,0,0)^T
\end{multline*}

\item The sign subspace $S^{(1,1,1,1)}$ simply gives a $\pm 1$ depending on whether the underlying permutation is even or odd; presumably this should be ignored by a system voting on cyclic orders, which will have equal amounts of each.
$$S^{(1,1,1,1)}=\text{span}(\left\{(1,1,-1,-1,1,1,-1,-1,-1,-1,1,1,-1,-1,1,1,1,1,-1,-1,1,1,-1,-1)^T\right\})$$
\item The $S^{(3,1)^{\oplus^3}}$ subspaces, spanned by the following triplets of vectors, all are in the kernel as well.
\begin{multline*}
u_{11}=(1, 1, -1, -1, -1, -1, 1, 1, 0, 0, 0, 0, 0, 0, 0, 0, 0, 0, 0, 0, 0, 0, 0, 0)^T,\\
u_{12}=(0, 0, 0, 0, 0, 0, 0, 0, -1, -1, 1, 1, 1, 1, -1, -1, 0, 0, 0, 0, 0, 0, 0, 0)^T,\\
u_{13}=(0, 0, 0, 0, 0, 0, 0, 0, 0, 0, 0, 0, 0, 0, 0, 0, 1, 1, -1, -1, -1, -1, 1, 1)^T,\\
u_{21}=(0, 0, 0, 0, 0, 0, 0, 0, -1, 1, 0, 0, -1, 1, 0, 0, 1, -1, 0, 0, 1, -1, 0, 0)^T,\\
u_{22}=(1, -1, 0, 0, 1, -1, 0, 0, 0, 0, 0, 0, 0, 0, 0, 0, 0, 0, -1, 1, 0, 0, -1, 1)^T,\\
u_{23}=(0, 0, -1, 1, 0, 0, -1, 1, 0, 0, 1, -1, 0, 0, 1, -1, 0, 0, 0, 0, 0, 0, 0, 0)^T,\\
u_{31}=(0, 0, 0, 0, 0, 0, 0, 0, 0, 0, 1, -1, 0, 0, -1, 1, 0, 0, 1, -1, 0, 0, -1, 1)^T,\\
u_{32}=(1, -1, 0, 0, -1, 1, 0, 0, 1, -1, 0, 0, -1, 1, 0, 0, 0, 0, 0, 0, 0, 0, 0, 0)^T,\\
u_{33}=(0, 0, -1, 1, 0, 0, 1, -1, 0, 0, 0, 0, 0, 0, 0, 0, -1, 1, 0, 0, 1, -1, 0, 0)^T
\end{multline*}
\end{itemize}
\caption{List of invariant subspaces for ROLO ballots}
\label{figure:rolodecomp}
\end{figure}

\begin{example}\label{ex:explainroloexamples}
Let's use Figure \ref{figure:rolodecomp} to explain our examples.  Recall that a generic\footnote{Of course, some, such as the zero matrix, will be trivial enough to have a bigger kernel and smaller effective space!} procedure will have a kernel and effective space component of $S^{(2,2)}$ and similarly for $S^{(2,1,1)}$ (twice as large of a kernel in the latter case), the effective parts of which will go directly to the corresponding space in $\mathbb{Q}CO_4$ in Corollary \ref{cor:4subspace}.  We can compute that the $S^{(2,1,1)}$ piece of the effective space of ROLO(2,1) is the space spanned by three vectors, each of which is the sum $2w_{1i}+w_{2i}+w_{3i}$.  (The kernel is the orthogonal complement to this, within that nine-dimensional subspace.)  This explains how Example \ref{ex:ROLOtiespace} was created.

Similarly, we can now see how Example \ref{ex:ROLOparadox} was created.  We started with a large piece of the $S^{(2,1,1)}$ effective space going to multiples of $(2,2,-1,-1,-1,-1)$ and added to it some large vectors in $S^{(2,2)}$ going to $(1,-1,0,0,0,0)$, so that $(ACBD)$ would be the winner.  Then, in order to obfuscate this reality, we found vectors in the kernel $S^{(2,1,1)}$ component (using linear algebra and Figure \ref{figure:rolodecomp}) which strongly favored $(ACBD)$ and \emph{subtracted those}.  Adding enough complete tie ballots to give positive entries finished things up.

For completeness, note that the span of $\{v_3,v_4\}$ in the $S^{(2,2)}$ bullet of Figure \ref{figure:rolodecomp} is the $S^{(2,2)}$ piece of the kernel of ROLO(2,1), while the span of $\{v_1,v_2\}$ is the effective space.

\end{example}

\subsection{Everything you can say}

Once we have these computations, one can make as general of statements as necessary for an application.  Although we hope people will consider voting/aggregating ballots related to cyclic orders, in some sense this is also an advertisement for the power of these methods.

As a concrete example, we might consider more general ROLO systems with the same ballot, such as ROLO($x$,1) where $a=x$ rather than $a=2$ as before.  This is a system that values more (or less) having exactly the `favorite' cyclic order over other cyclic orders compatible with the adjacencies in a ROLO ballot.

\begin{prop}\label{prop:rolox1spaces}
The effective and kernel $S^{(2,2)}$ subspaces for all ROLO($x$,1) procedures (except when $x=1$) are the same as in Example \ref{ex:explainroloexamples}.    The $S^{(2,1,1)}$ effective space for ROLO($x$,1) is spanned by the vectors $xw_{1i}+w_{2i}+w_{3i}$. 
\end{prop}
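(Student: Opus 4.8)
The plan is to exploit the isotypic decomposition of $\mathbb{Q}B$ in Figure \ref{figure:rolodecomp} together with Schur's Lemma, exactly as the preceding results set up. For ROLO($x$,1) the scoring matrix is the matrix of Proposition \ref{prop:QBmatrix} with $a=x$, $c=f=1$, and $b=d=e=0$; call it $M_x$. Because $M_x$ is an $S_4$-homomorphism into $\mathbb{Q}CO_4 \simeq S^{(4)}\oplus S^{(2,2)}\oplus S^{(2,1,1)}$, it must annihilate the $S^{(3,1)^{\oplus 3}}$ and $S^{(1,1,1,1)}$ pieces, act by a scalar on each irreducible copy inside $S^{(2,2)^{\oplus 2}}$ and $S^{(2,1,1)^{\oplus 3}}$, and the only thing left to pin down is those scalars. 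Since the copies exhibited in Figure \ref{figure:rolodecomp} have pairwise-disjoint supports (hence are mutually orthogonal with equal norm), the intersection of the kernel with each isotypic component is a union of whole copies, and its orthogonal complement --- the relevant piece of the effective space --- is spanned by the corresponding ``weighted diagonal'' vectors. So everything reduces to evaluating $M_x$ on one basis vector per copy.

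For the $S^{(2,2)}$ statement I would compute $M_x v_1$ and $M_x v_3$ directly. A short calculation gives $M_x v_1 = 4(x-1)(2,2,-1,-1,-1,-1)^T$ and $M_x v_3 = 0$ for every $x$ (the $x$-dependent entries of $v_3$ cancel in each row, as do the constant ones). Thus for $x\neq 1$ the copy $\mathrm{span}\{v_1,v_2\}$ maps isomorphically onto the $S^{(2,2)}$ of $\mathbb{Q}CO_4$ while $\mathrm{span}\{v_3,v_4\}$ lies in the kernel; since $v_1,v_2\perp v_3,v_4$, the effective $S^{(2,2)}$ subspace is $\mathrm{span}\{v_1,v_2\}$ and the kernel $S^{(2,2)}$ subspace is $\mathrm{span}\{v_3,v_4\}$, i.e.\ exactly as in Example \ref{ex:explainroloexamples}. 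The excluded value $x=1$ is explained transparently: it is precisely the case $M_x v_1 = 0$, where all of $S^{(2,2)^{\oplus 2}}$ collapses into the kernel.

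For the $S^{(2,1,1)}$ statement I would evaluate $M_x$ on the first vector of each triplet. The computation yields $M_x w_{11} = 4x\,(1,-1,0,0,0,0)^T$, $M_x w_{21} = 4\,(1,-1,0,0,0,0)^T$, and $M_x w_{31}=4\,(1,-1,0,0,0,0)^T$ --- the same output direction in all three cases, confirming that the triplet bases are aligned. Hence the three copies scale by $(\lambda_1,\lambda_2,\lambda_3)=4(x,1,1)$. Writing a general element of $S^{(2,1,1)^{\oplus 3}}$ as $\sum_{j,i}c_{ji}w_{ji}$, its image is $\sum_i\big(\sum_j\lambda_j c_{ji}\big)\,\mathrm{out}_i$, so the kernel within this component is cut out by $\sum_j\lambda_j c_{ji}=0$ for each $i$; using orthogonality and equal norms of the $w_{ji}$, its orthogonal complement is $\mathrm{span}\{\lambda_1 w_{1i}+\lambda_2 w_{2i}+\lambda_3 w_{3i}\}= \mathrm{span}\{x w_{1i}+w_{2i}+w_{3i}\}$, which is the claimed effective $S^{(2,1,1)}$ space.

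The routine matrix--vector products are not the hard part; the one place that deserves care is the reduction ``effective space $=$ orthogonal complement of kernel, computed componentwise.'' This needs (i) that distinct isotypic components are orthogonal --- standard for a rational orthogonal representation --- so that $E_f$ splits along the decomposition, and (ii) that the specific copies in Figure \ref{figure:rolodecomp} are mutually orthogonal with equal norms and that all three copies send $w_{j1}$ to a common output direction $\mathrm{out}_1$. Both (i) and (ii) are verifiable --- (ii) essentially from the disjoint-support structure of the listed bases together with the three evaluations above --- but they are exactly the hypotheses that make the clean formula $\sum_j\lambda_j w_{ji}$ (and the analogous statement for $\{v_1,v_2\}$) correct, and they are where I would be most careful to avoid a spurious answer.
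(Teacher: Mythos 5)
Your proposal is correct and follows essentially the same route as the paper's proof: evaluate the modified ROLO($x$,1) matrix directly on the exhibited basis vectors $v_i$ and $w_{ji}$ of Figure \ref{figure:rolodecomp}, then invoke Schur's Lemma together with the orthogonality of the listed copies to identify the kernel and effective subspaces, with $x=1$ singled out as the degenerate case where the whole $S^{(2,2)^{\oplus 2}}$ collapses into the kernel. You simply spell out the explicit images (e.g.\ $M_x v_1 = 4(x-1)(2,2,-1,-1,-1,-1)^T$ and $M_x w_{j1} = 4\lambda_j(1,-1,0,0,0,0)^T$) that the paper leaves as ``a suitably modified matrix'' computation.
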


\begin{proof}
One can directly compute by a suitably modified matrix that the $S^{(2,2)}$ kernel for ROLO(2,1) also goes to zero for ROLO($x$,1), and likewise that the other two $S^{(2,2)}$ basis vectors (as in the statement) go where indicated.  Schur's Lemma and the orthogonality of spaces spanned by $\{v_1,v_2\}$ and $\{v_3,v_4\}$, respectively, in Figure \ref{figure:rolodecomp} complete the proof of the first statement.  When $x=1$ the kernel $S^{(2,2)}$ component is the whole four-dimensional subspace.  The verification of the second statement is very similar to Example \ref{ex:explainroloexamples}.  
\end{proof}

\begin{prop}\label{prop:rolox1spaces2}
We can obtain more explicit results with raw computation.
\begin{itemize}
\item 
Vectors fully supporting one cyclic order and its reversal over against a different reversal pair, such as
$$(1,1,1,1,1,1,1,1,-1,-1,-1,-1,-1,-1,-1,-1,0,0,0,0,0,0,0,0)^T$$
are sent to outcomes of the form $4(x-1,x-1,1-x,1-x,0,0)$.  (Verify by adding the $x$ and $1$ points for each cyclic order, without using the matrix.)
\item 
As in Example \ref{ex:ROLOtiespace}, in the matrix for ROLO($x$,1) we subtract the second row from the first (and likewise with the others) to get a basis.   The orthogonal complement to this is the kernel.  
\item 
On the other hand, vectors fully supporting one cyclic order over against its own reversal, such as
$$(0,0,0,0,0,0,0,0,0,0,0,0,0,0,0,0,1,1,1,1,-1,-1,-1,-1)\;, $$
are sent to outcomes of the form $4(0,0,0,0,x,-x)$ fully supporting in the same way.  Note that when $x=0$ these are in the kernel, so they are clearly \emph{not} the effective space.
\end{itemize}
\end{prop}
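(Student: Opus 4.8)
The plan is to treat all three bullets as direct verifications against the explicit ROLO($x$,1) matrix, which is the matrix $M$ of Proposition \ref{prop:QBmatrix} specialized to $a=x$, $c=f=1$, and $b=d=e=0$ (so that ROLO(2,1) is recovered at $x=2$). With that matrix in hand, every claim is a matrix--vector product or a statement about its row space; the proposition itself advertises this as ``raw computation,'' and indeed there is no deep obstacle, only bookkeeping. I would first record which blocks of four columns belong to which cyclic order (via Figure \ref{figure:ROLO4}) and which pairs of cyclic orders are reversals, namely $\{1,2\}$, $\{3,4\}$, $\{5,6\}$ in the ordering of Figure \ref{figure:cyclic4}.

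For the first bullet I would feed in the vector $v$ that is $+1$ on the eight ballots of the reversal pair $\{1,2\}$ and $-1$ on the eight ballots of $\{3,4\}$, and compute $Mv$ coordinate by coordinate. The only ballots contributing the weight $x$ to a target order are the four ballots favoring that order, giving $\pm 4x$ to coordinates $1,2,3,4$ and nothing to $5,6$; the remaining $1$-point contributions come from ballots sharing a single adjacency, and these combine to $\mp 4$ in coordinates $1,\dots,4$ and cancel to $0$ in coordinates $5,6$. This yields $4(x-1,x-1,1-x,1-x,0,0)$. The alternative route the proposition suggests --- adding the $x$- and $1$-point contributions order by order without writing the matrix --- is the same computation reorganized, and I would use it as an independent check. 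The third bullet is identical in spirit: feeding in $w_{13}$ (the vector $+1$ on the order-$5$ ballots and $-1$ on the order-$6$ ballots, i.e.\ support of one order against its own reversal) produces $\pm 4x$ in coordinates $5,6$ and exact cancellation elsewhere, giving $4(0,0,0,0,x,-x)$.

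The second bullet is the one structural (rather than arithmetic) point, and follows from the standard duality $\ker M=(\text{row space of }M)^{\perp}=E_f^{\perp}$, since the effective space $E_f$ of Definition \ref{defn:effective} is by definition the orthogonal complement of the kernel and is exactly the row space of $M$. A spanning set for the row space is given by the six rows, which I would organize by reversal pairs: the three differences $r_1-r_2$, $r_3-r_4$, $r_5-r_6$ span the reversal-antisymmetric ($S^{(2,1,1)}$) part --- this is precisely the single computation of Example \ref{ex:ROLOtiespace} carried out for all three pairs --- while the three sums span the reversal-symmetric $S^{(4)}\oplus S^{(2,2)}$ part. The kernel is then the orthogonal complement of their span.

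The hard part will be purely organizational: correctly specializing the general matrix and keeping the column-to-order and adjacency bookkeeping straight, since an error in reading off a single block propagates through every coordinate. The one genuinely conceptual remark to pin down is the aside in the third bullet about $x=0$: here $w_{13}$ lands in the kernel, which shows that $w_{13}$ by itself is \emph{not} an effective-space vector. This is consistent with Proposition \ref{prop:rolox1spaces}, where the true $S^{(2,1,1)}$ effective generator is $xw_{13}+w_{23}+w_{33}$; I would make explicit that $w_{13}$ is merely one triplet generator from Figure \ref{figure:rolodecomp}, decomposing into kernel and effective parts, so that its image scales with $x$ and degenerates into the kernel exactly when $x=0$.
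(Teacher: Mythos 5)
Your proposal is correct and takes essentially the same route the paper intends: the proposition is asserted to follow from ``raw computation'' with the explicit ROLO($x$,1) matrix (Proposition \ref{prop:QBmatrix} with $a=x$, $c=f=1$, else zero), and your coordinate-by-coordinate verification of the two images, together with the identification of the kernel as the orthogonal complement of the row space (with the row differences $r_i-r_{i'}$ recovering the $S^{(2,1,1)}$ effective vectors $xw_{1i}+w_{2i}+w_{3i}$), is exactly that computation. Your closing remark tying the $x=0$ degeneration to Proposition \ref{prop:rolox1spaces} is a correct and worthwhile clarification, not a deviation.
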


More generally, we have some much more powerful statements, applicable to \emph{any neutral points-based rule} for this ballot space along the lines of Proposition \ref{prop:QBmatrix}.  Call the matrix in that Proposition $M$, the rule $f_M$, and recall the constants $a$ through $f$.

\begin{theorem}
Clearly the image of $T$ is $4(a+b+c+d+e+f)(1,1,1,1,1,1)$, so $T$ is in the kernel precisely when the weighting vector is sum-zero.  

The $S^{(2,2)}$ effective space is the span of the following two vectors\footnote{Because of the symmetries involved, it does not seem possible to find a basis for the $S^{(2,2)^{\oplus^2}}$ space that both decomposes into two \emph{orthogonal} pieces, but also looks natural for the general case.}:
\begin{multline*}
(2  a + 2  b - c - d - e - f,2  a + 2  b - c - d - e - f,2  a + 2  b - c - d - e - f,2  a + 2  b - c - d - e - f,\\
2  a + 2  b - c - d - e - f,2  a + 2  b - c - d - e - f,2  a + 2  b - c - d - e - f,2  a + 2  b - c - d - e - f,\\
-a - b + 2  c + 2  d - e - f,-a - b + 2  c + 2  d - e - f,-a - b - c - d + 2  e + 2  f,-a - b - c - d + 2  e + 2  f,\\
-a - b + 2  c + 2  d - e - f,-a - b + 2  c + 2  d - e - f,-a - b - c - d + 2  e + 2  f,-a - b - c - d + 2  e + 2  f,\\
-a - b + 2  c + 2  d - e - f,-a - b + 2  c + 2  d - e - f,-a - b - c - d + 2  e + 2  f,-a - b - c - d + 2  e + 2  f,\\
-a - b + 2  c + 2  d - e - f,-a - b + 2  c + 2  d - e - f,-a - b - c - d + 2  e + 2  f,-a - b - c - d + 2  e + 2  f)\\
(-a - b + 2  c + 2  d - e - f,-a - b + 2  c + 2  d - e - f,-a - b - c - d + 2  e + 2  f,-a - b - c - d + 2  e + 2  f,\\
-a - b + 2  c + 2  d - e - f,-a - b + 2  c + 2  d - e - f,-a - b - c - d + 2  e + 2  f,-a - b - c - d + 2  e + 2  f,\\
2  a + 2  b - c - d - e - f,2  a + 2  b - c - d - e - f,2  a + 2  b - c - d - e - f,2  a + 2  b - c - d - e - f,\\
2  a + 2  b - c - d - e - f,2  a + 2  b - c - d - e - f,2  a + 2  b - c - d - e - f,2  a + 2  b - c - d - e - f,\\
-a - b - c - d + 2  e + 2  f,-a - b - c - d + 2  e + 2  f,-a - b + 2  c + 2  d - e - f,-a - b + 2  c + 2  d - e - f,\\
-a - b - c - d + 2  e + 2  f,-a - b - c - d + 2  e + 2  f,-a - b + 2  c + 2  d - e - f,-a - b + 2  c + 2 \, d - e - f)
\end{multline*}
The image is 
\begin{multline*}
4 \, a^{2} + 8 \, a b + 4 \, b^{2} - 4 \, a c - 4 \, b c + 4 \, c^{2} - 4 \, a d - 4 \, b d + 8 \, c d + 4 \, d^{2} \\
- 4 \, a e - 4 \, b e - 4 \, c e - 4 \, d e + 4 \, e^{2} - 4 \, a f - 4 \, b f - 4 \, c f - 4 \, d f + 8 \, e f + 4 \, f^{2}
\end{multline*}
times the second space of Corollary \ref{cor:4subspace}.

Finally, the $S^{(2,1,1)}$ effective space for $f_M$ has a basis comprised of the weighted vector sums of corresponding vectors in each `triplet', namely $$(a-b)w_{1i}+(c-d)w_{2i}+(f-e)w_{3i}$$
with image $4(a-b)^2+4(c-d)^2+4(f-e)^2$ times the third space of Corollary \ref{cor:4subspace}.
\end{theorem}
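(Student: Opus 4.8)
The plan is to reduce everything to the single clean identity $E_{f_M}=(\ker f_M)^\perp=\operatorname{range}(M^T)$, which follows at once from Definition \ref{defn:effective} and the fundamental theorem of linear algebra (the orthogonal complement of the nullspace is the row space, i.e.\ the range of the transpose). Since $f_M$ is a $\mathbb{Q}S_4$-module homomorphism and both $\mathbb{Q}B$ and $\mathbb{Q}CO_4$ are permutation representations, every $\rho(\sigma)$ is orthogonal; transposing the equivariance relation $M\rho_B(\sigma)=\rho_{CO}(\sigma)M$ then shows $M^T$ is itself a $\mathbb{Q}S_4$-homomorphism $\mathbb{Q}CO_4\to\mathbb{Q}B$. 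By Schur's Lemma it carries each isotypic component of $\mathbb{Q}CO_4$ into the matching component of $\mathbb{Q}B$, so the entire effective space is obtained merely by feeding the three explicit basis families of Corollary \ref{cor:4subspace} (for $S^{(4)}$, $S^{(2,2)}$, and $S^{(2,1,1)}$) through $M^T$.

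For $T=S^{(4)}$ I would just read off row sums of the matrix of Proposition \ref{prop:QBmatrix}: each row contains every one of $a,b,c,d,e,f$ exactly four times, so $M$ sends the all-ones vector to $4(a+b+c+d+e+f)$ times itself, yielding the first claim and its kernel condition simultaneously. For the $S^{(2,2)}$ piece I would compute $M^T(2,2,-1,-1,-1,-1)^T$ and $M^T(-1,-1,2,2,-1,-1)^T$; dotting the columns of $M$ against these target vectors (e.g.\ column $1$ gives $2a+2b-c-d-e-f$, column $9$ gives $-a-b+2c+2d-e-f$) reproduces exactly the two stated $24$-vectors. Similarly, $M^T(1,-1,0,0,0,0)^T$ is literally the first row of $M$ minus the second (as a column), and matching it against the triplet basis of Figure \ref{figure:rolodecomp} shows it equals $(a-b)w_{11}+(c-d)w_{21}+(f-e)w_{31}$, the other two basis vectors following by the same reading of $M$.

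To pin down the image scalars I would use that $MM^T$ is an equivariant endomorphism of $\mathbb{Q}CO_4$, hence (by Corollary \ref{cor:co4decomp} and Schur's Lemma) acts as a single scalar $\lambda_\mu$ on each of its three once-occurring isotypic pieces. For any nonzero $\mathbf{c}$ in such a piece, $\|M^T\mathbf{c}\|^2=\mathbf{c}^T MM^T\mathbf{c}=\lambda_\mu\|\mathbf{c}\|^2$, so $\lambda_\mu=\|M^T\mathbf{c}\|^2/\|\mathbf{c}\|^2$, where the numerator is just the sum of squares of the entries of the effective vector already computed. For $S^{(2,2)}$ those entries take only three values $P,Q,R$, each appearing eight times, giving $\lambda=8(P^2+Q^2+R^2)/12$, which expands to the stated quadratic in $a,\dots,f$. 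For $S^{(2,1,1)}$ the triplet vectors $w_{11},w_{21},w_{31}$ have disjoint (hence orthogonal) supports, so $\|M^T(1,-1,0,0,0,0)^T\|^2=8[(a-b)^2+(c-d)^2+(f-e)^2]$ and $\lambda=8[\cdots]/2=4[(a-b)^2+(c-d)^2+(f-e)^2]$.

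I expect the main obstacle to be purely clerical: correctly extracting the right columns (for $S^{(2,2)}$) and the row-difference (for $S^{(2,1,1)}$) from the $6\times 24$ matrix of Proposition \ref{prop:QBmatrix} and matching them against the explicit vectors and the $w_{ij}$ triplets of Figure \ref{figure:rolodecomp}, together with the routine expansion of the $S^{(2,2)}$ quadratic. The one conceptual point worth flagging is that the two spanning vectors for the $S^{(2,2)}$ effective space need not be orthogonal to each other (as the footnote warns); the $\operatorname{range}(M^T)$ description sidesteps this entirely, since it produces a correct spanning set without our having to choose an orthogonal basis or to decide which copy inside $S^{(2,2)^{\oplus^2}}$ is the effective one.
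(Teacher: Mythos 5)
Your proposal is correct and follows essentially the same route as the paper's own proof: both identify the effective space as the image of $M^T$ applied to the basis vectors of Corollary \ref{cor:4subspace} and extract the multiplicative factors as eigenvalues of the symmetric equivariant matrix $MM^T$. Your write-up merely makes explicit the supporting facts the paper leaves implicit (equivariance of $M^T$ via orthogonality of permutation matrices, Schur's Lemma on the multiplicity-one isotypic pieces, and the norm identity $\|M^T\mathbf{c}\|^2=\lambda\|\mathbf{c}\|^2$), and your verifications of the entries and of the quadratic scalars check out.
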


\begin{proof}
In some sense, this is all banal computation with the matrix $M$.  A particularly efficient way to observe these is to note that we desire $M\p = k{\bf v}$ for the various vectors ${\bf v}$ in the bases of Corollary \ref{cor:4subspace}.  Then consider the real symmetric (and hence orthogonally diagonalizable) matrix $MM^T$; an eigenvector ${\bf v}$ of this yields a solution $M^T{\bf v}=\p$ to our problem, and the eigenvalues coming from $MM^T$ are the multiplicative factors above.
\end{proof}

We can apply this result easily, if desired.  The following example extends Proposition \ref{prop:rolox1spaces2}:
\begin{example}
The basis vector in the $S^{(2,1,1)}$ effective space for ROLO($x$,1) supporting a particular cyclic order goes to $4x^2+8$ times\footnote{Although the algebra is linear, this shows that the way linear procedures \emph{vary} need not be linear.} the outcome supporting an order and not its reversal.  This recovers the result for ROLO(2,1) in Example \ref{ex:ROLOtiespace}, as $4\cdot 2^2+8=24$.
\end{example}

\subsection{Even more ballots}

Importantly, there is no reason to presume that the ballots in question are actually ROLO ballots.  To use this analysis, \emph{we only need a set of twenty-four ballots which are a transitive $S_4$-set.}  As a perhaps less intuitive example, imagine we are in the setting of Scenario \ref{scen:table}.  Suppose each voter can submit a ballot that consists not of a preferred cyclic order or adjacency, but two people who should definitely be \emph{far} apart from each other (which immediately gives the other pair which are opposite), and a pair who should be adjacent in a particular direction (compatible with the other information, of course).  

For example, there are four such ballots which would define $(ACBD)/\CO{AC}{DB}$, where $XY-ZW$ means $X$ and $Y$ are opposite and $Z$ is to the right of $W$: 
$$AB-DA \quad AB-CB \quad AB-BD\quad AB-AC$$
We may call such ballots TRAD\footnote{To the Right, Across Diagonally.} ballots, and a system which gives two points to a cyclic order for fulfilling both conditions, one for fulfilling one, might be called TRAD(2,1).  Indeed, we obtain this from the matrix in Proposition \ref{prop:QBmatrix} when $a=2$, $b=1=c$, and zero otherwise.  Here are two sample results, analogous to earlier ones for ROLO, and using the same techniques.

\begin{example}
In TRAD(2,1), one emphasizes the diagonal a fair bit.  Nonetheless, by adding elements from the effective $S^{(2,1,1)}$ subspace of TRAD(2,1) to the $S^{(2,2)}$ part of the \emph{kernel} of TRAD(2,1), we can create the following profile, which strongly emphasizes cyclic orders where $A$ is across from $B$, and somewhat less emphasizes those where $A$ is across from $D$.
$$(218, 198, 128, 128, 218, 198, 128, 128, 26, 26, 186, 186, 6, 6, 166, 166, 0, 0, 250, 230, 0, 0, 250, 230)$$
Applying the appropriate matrix yields $(ABCD)$ as the winning cyclic order, which not very many people wanted!
\end{example}

\begin{example}
The effective space $S^{(2,1,1)}$ component for TRAD(2,1) is spanned by vectors of the form
$$(1, 1, 1, 1, -1, -1, -1, -1, 1, -1, 0, 0, 1, -1, 0, 0, 1, -1, 0, 0, 1, -1, 0, 0)$$
which are the sums $w_{1i}+w_{2i}$ in the $S^{(2,1,1)}$ in Figure \ref{figure:rolodecomp}.

Moreover, this effective space is in the kernel of ROLO(-1,1)!  In some sense, a rule which wants to promote either of two compatible directional adjacencies, but \emph{disadvantage} a cyclic order which has both, must a fortiori ignore a large part of diagonal information.
\end{example}

Although we will not pursue this here, the same analysis could be done for other ballot spaces.  An relatively easy one would be the set of ROLO ballots where $\ROLO{A}{CD}$ is identified with $\ROLO{A}{DC}$, since we could use our previous analysis -- but not directly, since we can't just have `half as big' of everything!  Doing this is a good exercise for anyone interested in using these techniques -- either by computing a character as in Theorem \ref{thm:cocharacter}, or by creating matrices.  Note that the outcome space is likely to be isomorphic to a quotient of $\mathbb{Q}CO_4$, since reversals can no longer be distinguished!

\section{The case of five}\label{section:cyclic5}

In this paper we do not attempt the characterizations for all $n$ which often appear in the literature.  This is largely because (as we have mentioned above) we don't have axiomatics for which subspaces are of most importance; contrast the situation with the $S^{(n-1,1)}$ and $S^{(n-2,1,1)}$ components in other related settings, such as voting on full orderings, or values for cooperative games.  The complexity of the character computations also grow substantially with $n$ and the number of partitions of $n$.

To give a taste of what lies in store, we end with a set of characterizations of neutral points-based procedures where voters pick a single cyclic order, as in Definition \ref{defn:points-based} with $n=5$; that is, where $B=CO_5$.   The earlier-mentioned combinatorial explosion will already be evident, but so will intriguing consistency with our $n=4$ observations.

\subsection{The decomposition}

First, Corollary \ref{cor:cocharacterprime} gives us the character of $\mathbb{Q}CO_5$, which we can then apply Proposition \ref{prop:characterfacts} to in order to find\footnote{The interactive calculator at \href{https://www.jgibson.id.au/articles/characters/}{https://www.jgibson.id.au/articles/characters/} is helpful for doing these computations, as well as being remarkably addictive.} the decomposition of $\mathbb{Q}CO_5$ in terms of irreducible $\mathbb{Q}S_5$-modules.  

\begin{prop}
We can decompose $\mathbb{Q}CO_5$ as a sum of irreducible $\mathbb{Q}S_5$-modules as follows:
$$\mathbb{Q}CO_5 \simeq S^{(5)}\oplus S^{(3,2)}\oplus S^{(3,1,1)^{\oplus^2}}\oplus S^{(2,2,1)}\oplus S^{(1,1,1,1,1)}$$
where these modules have dimensions $1$, $5$, $2\cdot 6$, $5$, and $1$, respectively.
\end{prop}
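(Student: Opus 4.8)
The plan is to compute the decomposition entirely from characters, using Proposition \ref{prop:characterfacts}: the multiplicity of each Specht module $S^\lambda$ in $\mathbb{Q}CO_5$ equals the inner product $(\chi_\lambda,\chi)$, where $\chi$ is the character of $\mathbb{Q}CO_5$ and $\chi_\lambda$ is the irreducible character indexed by $\lambda\vdash 5$. Since $5$ is prime, Corollary \ref{cor:cocharacterprime} tells us that $\chi$ is supported on only two conjugacy classes: $\chi(e)=24$ on the identity and $\chi(g)=4$ when $g$ is a $5$-cycle, with $\chi$ vanishing on every other class. This is the key simplification, since the inner product then collapses to a two-term sum rather than a sum over all seven classes of $S_5$.

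First I would record that the identity class has size $1$ and the class of $5$-cycles has size $24$ (there are $4!=24$ such cycles), so that $|S_5|=120$ and
$$(\chi_\lambda,\chi)=\frac{1}{120}\bigl(1\cdot\dim(S^\lambda)\cdot 24+24\cdot\chi_\lambda(5\text{-cycle})\cdot 4\bigr)=\frac{1}{5}\dim(S^\lambda)+\frac{4}{5}\chi_\lambda(5\text{-cycle}).$$
Thus I need only two pieces of data per partition: the dimension $\dim(S^\lambda)$ and the character value on a $5$-cycle.

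Next I would supply those values. The dimensions follow from the hook length formula, giving $1,4,5,6,5,4,1$ for $\lambda=(5),(4,1),(3,2),(3,1,1),(2,2,1),(2,1,1,1),(1^5)$ respectively, and the check $\sum\dim^2=120$ confirms them. For the values on a $5$-cycle I would invoke the standard fact (a one-line Murnaghan--Nakayama computation) that the character of $S_n$ on an $n$-cycle is nonzero only on the hook partitions $(n-k,1^k)$, where it equals $(-1)^k$. Hence on a $5$-cycle the values, in the same order, are $1,-1,0,1,0,-1,1$, the two non-hooks $(3,2)$ and $(2,2,1)$ each contributing $0$.

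Finally I would substitute into the displayed formula, obtaining multiplicities $1,0,1,2,1,0,1$. The $(4,1)$ and $(2,1,1,1)$ modules drop out, yielding
$$\mathbb{Q}CO_5\simeq S^{(5)}\oplus S^{(3,2)}\oplus S^{(3,1,1)^{\oplus^2}}\oplus S^{(2,2,1)}\oplus S^{(1,1,1,1,1)},$$
and the dimension count $1+5+2\cdot 6+5+1=24=(5-1)!$ confirms the result. There is no genuine obstacle beyond bookkeeping: the entire content is the observation that $\chi$ is supported on just two classes, after which the argument reduces to elementary arithmetic with known character-table entries.
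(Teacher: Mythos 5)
Your proposal is correct and follows essentially the same route as the paper's proof: both compute multiplicities via the character inner product of Proposition \ref{prop:characterfacts}, exploiting the fact from Corollary \ref{cor:cocharacterprime} that $\chi_{\mathbb{Q}CO_5}$ is supported only on the identity and the $5$-cycles. The only difference is one of completeness — the paper carries out the arithmetic just for $S^{(3,1,1)}$ (obtaining $\frac{1}{120}(24\cdot 1\cdot 6+4\cdot 24\cdot 1)=2$) and calls the rest standard, whereas you supply all seven multiplicities together with the hook-length dimensions and the Murnaghan--Nakayama values on an $n$-cycle.
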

\begin{proof}
This is a standard calculation, so we just give $S^{(3,1,1)}$ as an example.  The only nonzero values of $\chi$ are for the identity and for the conjugacy class of $5$-cycles, and the values of $\chi_{(3,1,1)}$ for those are $6$ and $1$.  There is one element in the identity's class, and $24$ in the $5$-cycle class, so we compute: $$24\cdot 1\cdot 6 + 4\cdot 1\cdot 24 = 240=2\cdot 120$$ which verifies this irreducible has multiplicity two in the decomposition.
\end{proof}

\begin{cor}
Any neutral points-based rule on $CO_5$ where ballots are a single cyclic order only depends on eight parameters.
\end{cor}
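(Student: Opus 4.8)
The plan is to observe that since the ballot set is $B=CO_5$, both the ballot space and the outcome space are the \emph{same} module $\mathbb{Q}CO_5$, so by the general observations preceding Definition \ref{defn:characterfacts} the scoring part of any neutral points-based rule is a $\mathbb{Q}S_5$-module endomorphism $f:\mathbb{Q}CO_5\to\mathbb{Q}CO_5$. The number of free parameters determining such a rule is therefore exactly the dimension of the endomorphism algebra $\mathrm{End}_{\mathbb{Q}S_5}(\mathbb{Q}CO_5)$, and I would compute this dimension directly from the decomposition in the preceding proposition using Schur's Lemma.

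First I would invoke Schur's Lemma in the strong form quoted in the excerpt: because $\mathbb{Q}$ is a splitting field for $S_5$, a $\mathbb{Q}S_5$-homomorphism between two irreducibles is zero unless they are isomorphic, in which case it is multiplication by a single scalar. Consequently, for a module written as $\bigoplus_\lambda (S^\lambda)^{\oplus m_\lambda}$, any endomorphism must carry each isotypic component $(S^\lambda)^{\oplus m_\lambda}$ into itself, and its restriction there is an arbitrary $m_\lambda\times m_\lambda$ matrix of scalars. Hence $\dim\mathrm{End}_{\mathbb{Q}S_5}(\mathbb{Q}CO_5)=\sum_\lambda m_\lambda^2$.

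Next I would read off the multiplicities from $\mathbb{Q}CO_5\simeq S^{(5)}\oplus S^{(3,2)}\oplus S^{(3,1,1)^{\oplus^2}}\oplus S^{(2,2,1)}\oplus S^{(1,1,1,1,1)}$, namely $m_\lambda=1$ for every partition except $(3,1,1)$, where $m_{(3,1,1)}=2$, and compute $1^2+1^2+2^2+1^2+1^2=8$, giving the claimed eight parameters.

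The only place requiring care — and the step I would flag as the main, if modest, obstacle — is remembering that the \emph{repeated} irreducible $S^{(3,1,1)}$ contributes $m^2=4$ parameters rather than $m=2$: an endomorphism on a multiplicity-two isotypic component is not just a pair of independent scalars but a full $2\times 2$ matrix, since the two isomorphic copies may be mixed by the homomorphism. This is precisely the $n=5$ analogue of the extra freedom already visible in the $S^{(2,1)^{\oplus^2}}$ summand of the $n=3$ ballot space in Example \ref{ex:3ballotspace}, and it is exactly what lifts the total above the naive ``one scalar per irreducible type'' count of five.
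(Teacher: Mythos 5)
Your proposal is correct and follows essentially the same route as the paper: both arguments apply Schur's Lemma over the splitting field $\mathbb{Q}$ to the decomposition $S^{(5)}\oplus S^{(3,2)}\oplus S^{(3,1,1)^{\oplus^2}}\oplus S^{(2,2,1)}\oplus S^{(1,1,1,1,1)}$, assigning one scalar to each multiplicity-one summand and a full $2\times 2$ block (four parameters) to the $S^{(3,1,1)}$ isotypic component, for a total of $4+4=8$. Your phrasing via $\dim\mathrm{End}_{\mathbb{Q}S_5}(\mathbb{Q}CO_5)=\sum_\lambda m_\lambda^2$ is just a cleaner packaging of the paper's count of ``two parameters for each of the two components.''
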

\begin{proof}
We have one parameter each by Schur's Lemma for $S^{(5)}$, $S^{(3,2)}$, $S^{(2,2,1)}$, and $S^{(1,1,1,1,1)}$.  Further, each component of $S^{(3,1,1)^{\oplus^2}}$ (in any particular way to write it as a direct sum) needs two parameters to describe what linear combination of vectors in the (same two) components of $S^{(3,1,1)^{\oplus^2}}$ it goes to.
\end{proof}

This means that even though there are many more possible irreducibles for $n=5$ than $n=4$, the degrees of freedom are not much more than for the ROLO ballots for $n=4$.

More interesting is identifying which concrete subspaces actually correspond to the irreducibles, and whether these identify new voting-theoretic ideas.  For this purpose, we fix the following ordering\footnote{As with the ROLO ballots, there are several other possible useful orderings, some even more symmetric, but this one works well for our purpose.} of $CO_5$.  Each cyclic order is paired with its reversal.

\begin{figure}[H]
$$\begin{array}{cccccc}
(ABCDE) & (AEDCB) & (ABCED) & (ADECB) & (ABDCE) & (AECDB) \\
(ABDEC) & (ACEDB) & (ABECD) & (ADCEB) & (ABEDC) & (ACDEB) \\
(ACBDE) & (AEDBC) & (ACDBE) & (AEBDC) & (ACEBD) & (ADBEC) \\
(ADBCE) & (AECBD) & (AEBCD) & (ADCBE) & (ACBED) & (ADEBC)
\end{array}$$
\caption{Cyclic orders for $n=5$}
\label{figure:CO5}
\end{figure}

In Figure \ref{figure:co5decomp} we have the full decomposition of the vector space $\mathbb{Q}^{24}$ with respect to the $S_5$ action.  Some subspaces are easier to identify than others!  For instance, $S^{(5)}$ will clearly be the space spanned by $(1,1,\ldots ,1)^T$.  Since when $n=5$ any $5$-cycle is already an even permutation, we can likewise ignore the `cyclic' nature of the cyclic orders and see that the sign representation $S^{(1,1,1,1,1)}$ must simply have $\pm 1$ as according to whether the underlying permutation is even or odd.  The other irreducibles need some additional ingenuity (and computational assistance) to find, but once found then it is once again a standard character computation to verify they are correct.  

\begin{example}
As an example, consider the $S^{(3,1,1)^{\oplus^2}}$ component in Figure \ref{figure:co5decomp} and compare to the ordering in Figure \ref{figure:CO5}.  It might be surprising that the space generated by all profiles (or differentials) which support a given cyclic order and oppose its reversal would have only one type of irreducible, one type of symmetry -- and, despite the same thing happening in Corollary \ref{cor:4subspace}, this does not generalize even to $n=6$ or $n=7$.  

Yet the computation is clear.  The space is clearly completely fixed under the identity.  A given $5$-cycle will fix \emph{two} of the reversal pairs; for instance, $(ABCDE)$ fixes the pairs $(ABCDE),(AEDCB)$ and $(ACEBD),(ADBEC)$.  Finally, while most other cycle classes obviously cannot fix anything, a product of two-cycles, such as $(BE)(CD)$ can fix a pair, but swapping the places of a cyclic order and its reversal.  A given element does so with four of the pairs\footnote{For this element, it would be the pairs $(ABCDE),(AEDCB)$,  $(ACEBD),(ADBEC)$, $(ABDCE),(AECDB)$, and $(ACBED),(ADEBC)$ which have $B,E$ or $C,D$ surrounding $A$ in the cyclic order.}, so the trace of the action on this subspace for such an element is $-4$.  This is exactly twice the character for $S^{(3,1,1)}$ for all classes, so the space is isomorphic (not canonically, to be sure) to $S^{(3,1,1)}\oplus S^{(3,1,1)}$.
\end{example}

\begin{figure}[p]
Here is the decomposition $\mathbb{Q}CO_5$ as $S_5$-invariant subspaces.  It is helpful to think of the $24$ entries in each vector as corresponding to twelve sets of reversal pairs.
\begin{itemize}
\item The trivial subspace $S^{(5)}=T$:
$$S^{(5)}=\text{span}\left\{(1,1,1,1,1,1,1,1,1,1,1,1,1,1,1,1,1,1,1,1,1,1,1,1)^T\right\}$$

\item The sign subspace $S^{(1,1,1,1,1)}$ gives $\pm 1$ as outlined above.
$$S^{(1,1,1,1)}=\text{span}\left\{(1,1,-1,-1,-1,-1,1,1,1,1,-1,-1,-1,-1,1,1,-1,-1,1,1,-1,-1,1,1)^T\right\}$$ 

\item We then have two \emph{non}-isomorphic, but similar, subspaces.  $S^{(3,2)}$ is the span of the $\{y_i\}$ and $S^{(2,2,1)}$ is the span of the $z_i$.
\begin{multline*}
y_1=(5,5,-1,-1,-1,-1,-1,-1,-1,-1,-1,-1,-1,-1,-1,-1,5,5,-1,-1,-1,-1,-1,-1)^T,\\
y_2=(-1,-1,5,5,-1,-1,-1,-1,-1,-1,-1,-1,-1,-1,-1,-1,5,5,-1,-1,-1,-1,-1,-1)^T,\\
y_3=(-1,-1,-1,-1,5,5,-1,-1,-1,-1,-1,-1,-1,-1,-1,-1,-1,-1,-1,-1,-1,-1,5,5)^T,\\
y_4=(-1,-1,-1,-1,-1,-1,5,5,-1,-1,-1,-1,-1,-1,-1,-1,-1,-1,-1,-1,5,5,-1,-1)^T,\\
y_5=(-1,-1,-1,-1,-1,-1,-1,-1,5,5,-1,-1,5,5,-1,-1,-1,-1,-1,-1,-1,-1,-1,-1)^T\\
z_1=(5,5, 
1,1, 
1,1, 
-1,-1, 
-1,-1, 
1,1, 
1,1, 
-1,-1, 
-5,-5, 
-1,-1, 
1,1, 
-1,-1)^T,\\ 
z_2=(1,1,5,5,-1,-1,1,1,1,1,-1,-1,-1,-1,-5,-5,-1,-1,1,1,-1,-1,1,1)^T,\\ 
z_3=(1,1,-1,-1,5,5,1,1,1,1,-1,-1,-1,-1,1,1,-1,-1,1,1,-1,-1,-5,-5)^T,\\ 
z_4=(-1,-1,1,1,1,1,-1,-1,5,5,1,1,-5,-5,-1,-1,1,1,-1,-1,1,1,-1,-1)^T,\\ 
z_5=(1,1,-1,-1,-1,-1,1,1,1,1,5,5,-1,-1,1,1,-1,-1,-5,-5,-1,-1,1,1)^T
\end{multline*}

\item The final (reducible) subspace $S^{(3,1,1)^{\oplus^2}}$ is the span of twelve very basic vectors.
\begin{multline*}
(1, -1, 0, 0, 0, 0, 0, 0, 0, 0, 0, 0, 0, 0, 0, 0, 0, 0, 0, 0, 0, 0, 0, 0)^T,\\
(0, 0, 1, -1, 0, 0, 0, 0, 0, 0, 0, 0, 0, 0, 0, 0, 0, 0, 0, 0, 0, 0, 0, 0)^T,\\
(0, 0, 0, 0, 1, -1, 0, 0, 0, 0, 0, 0, 0, 0, 0, 0, 0, 0, 0, 0, 0, 0, 0, 0)^T,\\
(0, 0, 0, 0, 0, 0, 1, -1, 0, 0, 0, 0, 0, 0, 0, 0, 0, 0, 0, 0, 0, 0, 0, 0)^T,\\
(0, 0, 0, 0, 0, 0, 0, 0, 1, -1, 0, 0, 0, 0, 0, 0, 0, 0, 0, 0, 0, 0, 0, 0)^T,\\
(0, 0, 0, 0, 0, 0, 0, 0, 0, 0, 1, -1, 0, 0, 0, 0, 0, 0, 0, 0, 0, 0, 0, 0)^T,\\
(0, 0, 0, 0, 0, 0, 0, 0, 0, 0, 0, 0, 1, -1, 0, 0, 0, 0, 0, 0, 0, 0, 0, 0)^T,\\
(0, 0, 0, 0, 0, 0, 0, 0, 0, 0, 0, 0, 0, 0, 1, -1, 0, 0, 0, 0, 0, 0, 0, 0)^T,\\
(0, 0, 0, 0, 0, 0, 0, 0, 0, 0, 0, 0, 0, 0, 0, 0, 1, -1, 0, 0, 0, 0, 0, 0)^T,\\
(0, 0, 0, 0, 0, 0, 0, 0, 0, 0, 0, 0, 0, 0, 0, 0, 0, 0, 1, -1, 0, 0, 0, 0)^T,\\
(0, 0, 0, 0, 0, 0, 0, 0, 0, 0, 0, 0, 0, 0, 0, 0, 0, 0, 0, 0, 1, -1, 0, 0)^T,\\
(0, 0, 0, 0, 0, 0, 0, 0, 0, 0, 0, 0, 0, 0, 0, 0, 0, 0, 0, 0, 0, 0, 1, -1)^T
\end{multline*}

\end{itemize}
\caption{List of invariant subspaces for $CO_5$ ballots}
\label{figure:co5decomp}
\end{figure}

Before we look at specifics of the voting rules, let's now use these spaces (and the ones from $n=4$) to propose some potential voting ideas, with brief commentary.  The reader does not have to agree with the proposals!  They are meant solely as food for thought.  We do think they are reasonable, and in most cases the decompositions merely confirm what one might have thought of anyway.

\newpage
\begin{proposal}\label{proposal:useful}
The following should be considered as useful data on any voting system on cyclic orders, points-based or not:
\begin{itemize}
\item The number of total voters in a profile should be recorded.  
\begin{itemize}
\item This corresponds to the trivial subspace, as usual.  One could choose to kill it or to use it when adding profiles, as in normal voting.
\end{itemize}
\item When $n$ is odd, since the number of simple transpositions needed to convert a cyclic order into another one is invariant, this number and its parity may be taken into account in evaluating a system. 
\begin{itemize}
\item For instance, one could ask for a sign subspace to go to zero.
\end{itemize}
\item There is a fundamental distinction between systems that consider there to be no difference between a cyclic order and its reversal, and those which distinguish between them.
\begin{itemize}
\item One doesn't need the subspaces which are generated by profile differentials of the type $\{1,-1,0,\ldots ,0\}$ supporting an order and denigrating its reversal to assert this, but their consistency certainly helps.
\end{itemize}
\item Reversal pairs should be considered to behave in some ways like candidates in ordinary voting theory, and procedures should be designed/analyzed with this in mind.
\begin{itemize}
\item The $S^{(2,2)}$, $S^{(3,2)}$, and $S^{(2,2,1)}$ spaces support this line of thought.
\end{itemize}
\end{itemize}
\end{proposal}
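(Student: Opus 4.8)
The Proposal bundles four interpretive claims, each asserting that a particular $S_n$-invariant summand of $\mathbb{Q}CO_n$ carries a specific type of voting data; so rather than one deductive argument, the plan is to justify each bullet by pinning down the relevant subspace and exhibiting the correspondence. The first bullet is immediate: the trivial summand $S^{(n)}=T$ is spanned by the all-ones vector (as in Figures \ref{figure:rolodecomp} and \ref{figure:co5decomp}), so the orthogonal projection of any profile onto $T$ is a constant multiple of the total vote count, and the choice to kill or keep this component is exactly the usual sum-zero-versus-count-sensitive distinction from ordinary voting.

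For the second bullet the key point is purely group-theoretic: when $n$ is odd, an $n$-cycle is an even permutation, so $H=\langle(12\cdots n)\rangle$ lies inside the alternating group $A_n$. Hence the sign character is constant on each coset of $H$ and descends to a well-defined $\pm1$ function on $CO_n=S_n/H$; this is precisely the sign subspace $S^{(1,1,\ldots,1)}$, and its values record the parity of the transposition distance between orders. I would make this rigorous simply by verifying $H\leqslant A_n$, concluding that sign is a class function on the cosets, and noting that this argument breaks for even $n$ (where $H\not\leqslant A_n$), which explains the hypothesis in the bullet.

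The third bullet I would settle by identifying the ``order-versus-reversal'' summands — namely $S^{(2,1,1)}$ for $n=4$ and $S^{(3,1,1)^{\oplus^2}}$ for $n=5$ — as exactly the span of the $\{1,-1,0,\ldots,0\}$ differentials listed in the two decomposition figures, and then checking, via the same character bookkeeping used in the $n=5$ Example, that these summands contain no trivial and no sign component. From this it follows that any rule identifying an order with its reversal must annihilate them, while a rule distinguishing the two may scale them, which is precisely the asserted ``fundamental distinction.''

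The fourth bullet is the main obstacle, since the slogan ``reversal pairs behave like candidates'' has no canonical formalization. The plan is to form the quotient set of $(n-1)!/2$ reversal pairs, on which $S_n$ still acts, compute the character of its permutation representation directly by mimicking Theorem \ref{thm:cocharacter}, and then show that $S^{(2,2)}$, $S^{(3,2)}$, and $S^{(2,2,1)}$ are exactly the summands that survive the quotient — equivalently, the summands fixed (not negated) by the reversal involution. The delicate step is that the reversal action can swap an order with its reversal while fixing the underlying pair, which is the $(BE)(CD)$-type behavior flagged in the $n=5$ Example; one must track this carefully when matching multiplicities, since it is exactly what separates these ``pair-level'' irreducibles from the order-versus-reversal summands of the third bullet.
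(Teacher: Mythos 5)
You should first note that the statement you are addressing is a \emph{Proposal} in the paper's remark style: the authors introduce it explicitly as ``food for thought'' (``The reader does not have to agree with the proposals!'') and supply no proof, only the parenthetical sub-bullets plus the decompositions already computed. So your attempt is not competing with an argument in the paper; it supplies mathematical content the authors left implicit, and for the most part that content checks out against what the paper does prove elsewhere. Your first two bullets reproduce the paper's own reasoning: the all-ones span for the trivial summand, and the observation that for odd $n$ the generator of $H=\langle(12\cdots n)\rangle$ is an even permutation, so the sign character is constant on cosets and descends to $S_n/H$ (this is exactly the remark the paper makes for $n=5$ just before Figure \ref{figure:co5decomp}, and your converse for even $n$ is consistent with the absence of $S^{(1,1,1,1)}$ in Proposition \ref{prop:co4decomp}). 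Your third bullet likewise reads the identification of the reversal-differential span with $S^{(2,1,1)}$, resp.\ $S^{(3,1,1)^{\oplus 2}}$, directly off Corollary \ref{cor:4subspace} and Figure \ref{figure:co5decomp}. Where you genuinely go beyond the paper is the fourth bullet: passing to the $S_n$-set of reversal pairs, computing its permutation character by adapting Theorem \ref{thm:cocharacter}, and matching the reversal-invariant summands is precisely what the paper defers to Section \ref{section:further} as future work. Two cautions there: the trivial summand (and, for odd $n$, the sign summand) also survives the quotient, so $S^{(2,2)}$, $S^{(3,2)}$, $S^{(2,2,1)}$ are ``exactly the surviving summands'' only after setting those aside; and, as you correctly flag, the fixed-point count on pairs is not half the count on orders because elements like $(BE)(CD)$ fix a pair while transposing its two members. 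What your route buys is a verifiable statement behind each bullet; what the paper's route buys is brevity and an honest framing of the bullets as normative suggestions rather than theorems.
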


Here is a less evident proposal.  Recall the computation of the character of $S^{(3,1,1)^{\oplus^2}}$ in Figure \ref{figure:co5decomp} and observe the distinction between $S^{(3,2)}$ and $S^{(2,2,1)}$.
\begin{proposal}\label{proposal:possible}
There is meaningful voting information involved in the distinction between `seating' people in a given cyclic order and instead seating them $k$ `seats' apart (in the same order) for some $k>1$, when this is possible number-theoretically.

This is a generalization of reversal, which may be considered to be the case $k=n-1$.  For $n=4$ no others are possible, but for $n=5$ several of these reorderings are possible since $5$ is prime.  For example, note the distinction between $(ABCDE)$ and $(ACEBD)$ via the \emph{right} action of $(1254)$, which is $k=3$ in this setting.
\end{proposal}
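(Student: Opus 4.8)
The plan is to turn the informal phrase ``reseat the same people $k$ seats apart'' into a concrete symmetry of $\mathbb{Q}CO_n$ and then measure it against the decomposition. Recall from \ref{defn:cyclicordercosets} that $CO_n=S_n/H$ with $H=\langle c\rangle$, $c=(12\cdots n)$. Reading a cyclic order as a sequence of occupants and redistributing them to seats $1,1+k,1+2k,\dots\pmod n$ is precisely a right multiplication $R_\tau\colon gH\mapsto g\tau H$ by a seat-permutation $\tau$; the proposal's example comes from $\tau=(1254)$, for which one checks $\tau c\tau^{-1}=c^{3}$. First I would verify that $R_\tau$ is well defined on cosets \emph{iff} $\tau$ normalizes $H$, and that in that case it commutes with the left $S_n$-action of \ref{remark:symmetricaction}, so $R_\tau$ is a $\mathbb{Q}S_n$-module automorphism of $\mathbb{Q}CO_n$. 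Since the centralizer of an $n$-cycle is $H$ itself and $N_{S_n}(H)/H\hookrightarrow \mathrm{Aut}(H)\cong(\mathbb{Z}/n)^{*}$ is onto, these operators $\rho_k$ form a group isomorphic to $(\mathbb{Z}/n)^{*}$, with $k$ recording the induced power $c\mapsto c^{k}$; reversal is exactly $\rho_{-1}=\rho_{n-1}$. This already settles the small cases structurally, since $(\mathbb{Z}/4)^{*}=\{\pm1\}$ (reversal is the only nontrivial reseating) while $(\mathbb{Z}/5)^{*}\cong C_4$ contains $\rho_2,\rho_3$ of order $4$ with $\rho_2^{2}=\rho_{-1}$ equal to reversal.

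Next I would extract the representation-theoretic consequences. Because each $\rho_k$ is a $\mathbb{Q}S_n$-automorphism, the strong form of Schur's Lemma forces it to act as a single rational scalar on every multiplicity-free constituent, and a rational scalar of finite order is $\pm1$. Hence on $S^{(5)},S^{(3,2)},S^{(2,2,1)},S^{(1,1,1,1,1)}$ the operator $\rho_2$ is $\pm1$, and the only room for genuinely new behaviour is the repeated constituent $S^{(3,1,1)^{\oplus2}}$, whose endomorphism algebra is $M_2(\mathbb{Q})$. There $\rho_2$ acts as a $2\times2$ rational matrix $A$ with $A^{2}=-I_2$ (since reversal acts as $-I$ on that block, the vectors of Figure \ref{figure:co5decomp} being antisymmetric within each reversal pair), so $A$ has eigenvalues $\pm i$ and order $4$ --- a square root of reversal that reversal cannot itself detect.

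The cleanest way to see the distinction the proposal highlights is a trace computation rather than a basis grind. For $k\neq1$ the map $R_\tau$ has \emph{no} fixed cosets (as $g\tau H=gH$ forces $\tau\in H$), so its trace on $\mathbb{Q}CO_5$ vanishes. Decomposing this trace over the summands gives $s_{(5)}\cdot1+s_{(1^5)}\cdot1+s_{(3,2)}\cdot5+s_{(2,2,1)}\cdot5+6\,\mathrm{tr}(A)=0$. Here $s_{(5)}=1$ (the all-ones vector is preserved), $\mathrm{tr}(A)=0$ from the previous paragraph, and $s_{(1^5)}=\mathrm{sgn}(\tau)=-1$ because the order-$4$ element is an odd $4$-cycle. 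The identity collapses to $5\,(s_{(3,2)}+s_{(2,2,1)})=0$, and since each scalar is $\pm1$ (using $\rho_2^{2}=$ reversal, which is $+1$ on both $S^{(3,2)}$ and $S^{(2,2,1)}$), they must be \emph{opposite}. Thus the reseating $\rho_2$ separates $S^{(3,2)}$ from $S^{(2,2,1)}$ by a sign, even though reversal acts as $+1$ on both --- exactly the ``meaningful voting information'' the proposal claims is invisible to reversal alone.

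The main obstacle is not the trace identity but the bookkeeping that feeds it: correctly matching the combinatorial ``$k$ seats apart'' with the normalizer element $\tau$ and the induced power $c\mapsto c^{k}$, getting the left-versus-right and reversal conventions consistent (the proposal's own example $(ABCDE)\mapsto(ACEBD)$ differs from the literal right action of $(1254)$ by a reversal, so an off-by-a-reversal slip is easy), and confirming from Figure \ref{figure:co5decomp} that reversal is $+1$ on the $\{y_i\},\{z_i\}$ bases and $-1$ on the twelve spanning vectors of $S^{(3,1,1)^{\oplus2}}$. Once those scalars and the sign $\mathrm{sgn}(\tau)=-1$ are nailed down, the argument is forced; pinning down \emph{which} of $S^{(3,2)},S^{(2,2,1)}$ receives $+1$ versus $-1$ (not needed for the claim, only for a fully explicit statement) would require a single direct evaluation of $\rho_2$ on one $y_i$ or $z_i$.
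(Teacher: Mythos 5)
Your argument is correct, and it takes a genuinely different -- and more structural -- route than the paper. The statement is only a Proposal, so the paper's support for it is computational rather than deductive: Proposition \ref{prop:5matrix} exhibits the general $24\times 24$ scoring matrix and observes that the scalars on $S^{(3,2)}$ and $S^{(2,2,1)}$ are $a+b-c-d-e-f+g+h$ and $a+b+c+d-e-f-g-h$, so the weights $g,h$ attached to the ``$k$ seats apart'' order enter the two five-dimensional components with opposite signs, and Example \ref{ex:co5closertoborda} shows a rule whose kernel changes precisely when $s((ABCDE),(ACEBD))$ is adjusted. You instead realize the reseatings intrinsically as the right action of $N_{S_5}(H)/H\cong(\mathbb{Z}/5)^{*}$ on $S_5/H$, check that it commutes with the left action, and combine Schur's Lemma, $\rho_2^{2}=\rho_{-1}$, and the fixed-point-free trace identity $0=1-1+5s_{(3,2)}+5s_{(2,2,1)}+6\,\mathrm{tr}(A)$ to force $s_{(3,2)}=-s_{(2,2,1)}\in\{\pm 1\}$. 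This explains \emph{why} the two components are separated rather than merely exhibiting it, recovers the $n=4$ claim for free from $(\mathbb{Z}/4)^{*}=\{\pm 1\}$, and generalizes to all $n$ without touching the big matrix; what it deliberately leaves open -- and what the paper's computation supplies -- is the explicit assignment of signs (the matrix formulas give $+1$ on $S^{(3,2)}$ and $-1$ on $S^{(2,2,1)}$, consistent with your constraint) and the concrete coefficients one needs to actually design rules. Your side checks are also accurate: reversal is $+1$ on the $y_i,z_i$ and $-1$ on the twelve-dimensional block of Figure \ref{figure:co5decomp}, the order-four coset representatives are odd $4$-cycles so the sign component contributes $-1$, and the literal right action of $(1254)$ on $(ABCDE)$ indeed lands on $(ADBEC)$, the reversal of the order named in the Proposal -- a convention slip worth flagging but harmless to the argument.
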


See Example \ref{ex:co5closertoborda} for evidence in favor of this.  We strongly suspect that \emph{additional} ideas will come with further study of the subspaces -- see also Section \ref{section:further}.

\subsection{Voting rules}

Although we will give the most general form of the matrix for $s:CO_5\to CO_5$ shortly, one $24\times 24$ matrix will suffice!   But we can give a voting rule simply from $s$ itself, as we have now practiced several times.

\begin{example}\label{ex:co5notsoborda}
Let $s:CO_5\to CO_5$ be the scoring function which gives points as follows:
\begin{itemize}
\item It awards four points if the two cyclic orders are the same.
\item If they differ by a transposition, score three points.
\item If they differ by two transpositions, score two points
\item If by three, score one point.
\item The reversal (which differs by four transpositions) scores zero points.
\end{itemize}
To be concrete, we have scores like $s((ABCDE),(ABDCE))=3$, $s((ABCDE),(AEDCB))=0$, and $s((ABCDE),(ABECD))=2$.  The way in which it is defined makes it clear it is neutral, and one could perhaps consider this to be a `Borda-like' system.

Now, what can we say about this system?  Even without a matrix (but see Proposition \ref{prop:5matrix}) it is pretty easy, if tedious, to compute that \emph{none} of the vectors in the decomposition of $\mathbb{Q}CO_5$ are killed.  In fact, most are simply dilated by a factor of two.  

However, the twelve-dimensional space $S^{(3,1,1)^{\oplus^2}}$ has a more complicated combination where a profile vector supporting a cyclic order (and opposing its reversal) is sent to an outcome vector giving:
\begin{itemize}
\item Four points to the order itself,
\item Negative four points to its reversal,
\item Two points to one differing by a transposition,
\item Negative two points to \emph{their} reversals, and
\item Zero to everything else.
\end{itemize}
That's a lot more like what we should be used to seeing from more-studied voting systems!
\end{example}

We can create a system which \emph{only} performs the latter transformation by taking the hint to have a sum-zero set of weights, along with one more change.

\begin{example}\label{ex:co5closertoborda}
Let $s:CO_5\to CO_5$ be the scoring function which gives points as follows:
\begin{itemize}
\item It awards two points if the two cyclic orders are the same.
\item If they differ by a transposition, score one point.
\item The reversal of one differing by a transposition scores negative one point.
\item The reversal will score negative two points.
\item All other scores are two points.
\end{itemize}
All scoring is the same as before (minus two points each), except that now \\$s((ABCDE),(ACEBD))=0$, which otherwise would have been $-1$ if we had just subtracted two (and its reversal would have also garnered $-1$ before).  (Recall Proposal \ref{proposal:possible}.)

This system behaves \emph{exactly the same} as the one in Example \ref{ex:co5notsoborda} on the $12$-dimensional subspace, but completely ignores the information from the other subspaces!   Which, to be honest, is a lot like what the Borda Count does; it amplifies one very specific piece of information (profiles which emphasize one particular candidate, symmetrically) and intentionally ignores any other information.  This is an appropriate system to end our initial investigations on!
\end{example}

It is possible to construct lengthier justifications of all this, but the easiest way to do so is for us to finally provide a matrix that represents any such voting rule.
The justification that this works is simply matrix multiplication on each of the subspaces in question, though of course \emph{finding} this matrix required ingenuity like in Proposition \ref{prop:4matrix}.  Our hope is that some of the ideas in Section \ref{section:further} will help in providing general results of this type in the future.

\newpage
\begin{prop}\label{prop:5matrix}
Any neutral points-based voting rule on $CO_5$ must have a matrix of scoring function values of the following form:

$$\begin{pmatrix}
a & b & c & d & c & d & e & f & e & f & d & c & c & d & e & f & g & h & e & f & c & d & f & e\\
b & a & d & c & d & c & f & e & f & e & c & d & d & c & f & e & h & g & f & e & d & c & e & f\\
c & d & a & b & e & f & d & c & c & d & e & f & f & e & g & h & e & f & c & d & e & f & c & d\\
d & c & b & a & f & e & c & d & d & c & f & e & e & f & h & g & f & e & d & c & f & e & d & c\\
c & d & e & f & a & b & c & d & d & c & e & f & e & f & d & c & e & f & c & d & f & e & h & g\\
d & c & f & e & b & a & d & c & c & d & f & e & f & e & c & d & f & e & d & c & e & f & g & h\\
e & f & d & c & c & d & a & b & e & f & c & d & c & d & f & e & d & c & f & e & h & g & f & e\\
f & e & c & d & d & c & b & a & f & e & d & c & d & c & e & f & c & d & e & f & g & h & e & f\\
e & f & c & d & d & c & e & f & a & b & c & d & h & g & e & f & d & c & f & e & d & c & e & f\\
f & e & d & c & c & d & f & e & b & a & d & c & g & h & f & e & c & d & e & f & d & c & f & e\\
d & c & e & f & e & f & c & d & c & d & a & b & f & e & d & c & f & e & h & g & f & e & c & d\\
c & d & f & e & f & e & d & c & d & c & b & a & e & f & c & d & e & f & g & h & e & f & d & c\\
c & d & f & e & e & f & c & d & g & h & f & e & a & b & c & d & e & f & d & c & f & e & c & d\\
d & c & e & f & f & e & d & c & h & g & e & f & b & a & d & c & f & e & c & d & e & f & d & c\\
e & f & h & g & d & c & f & e & e & f & d & c & c & d & a & b & d & c & e & f & d & c & e & f\\
f & e & g & h & c & d & e & f & f & e & c & d & d & c & b & a & c & d & f & e & c & d & f & e\\
h & g & e & f & e & f & d & c & d & c & f & e & e & f & d & c & a & b & d & c & e & f & c & d\\
g & h & f & e & f & e & c & d & c & d & e & f & f & e & c & d & b & a & c & d & f & e & d & c\\
e & f & c & d & c & d & f & e & f & e & g & h & d & c & e & f & d & c & a & b & d & c & f & e\\
f & e & d & c & d & c & e & f & e & f & h & g & c & d & f & e & c & d & b & a & c & d & e & f\\
c & d & e & f & f & e & g & h & c & d & f & e & f & e & d & c & e & f & d & c & a & b & d & c\\
d & c & f & e & e & f & h & g & d & c & e & f & e & f & c & d & f & e & c & d & b & a & c & d\\
f & e & c & d & g & h & f & e & e & f & c & d & c & d & e & f & c & d & f & e & d & c & a & b\\
e & f & d & c & h & g & e & f & f & e & d & c & d & c & f & e & d & c & e & f & c & d & b & a
\end{pmatrix}$$

This transformation scales each irreducible subspace in Figure \ref{figure:co5decomp} as follows:
\begin{itemize}
\item The trivial subspace is scaled by $a + b + 5c + 5d + 5e + 5f + g + h$.
\item The sign subspace is scaled by $a + b - 5c - 5d + 5e + 5f - g - h$.
\item The five-dimensional subspaces $S^{(3,2)}$ and $S^{(2,2,1)}$ are scaled by\\ $a + b - c - d - e - f + g + h$ and $a + b + c + d - e - f - g - h$, respectively.
\item Any one of the vectors in the twelve-dimensional subspace $S^{(3,1,1)^{\oplus^2}}$ corresponding to a cyclic order $x$ is sent to a vector with
\begin{itemize}
\item $a-b$ points for $x$ (and $b-a$ points for its reversal),
\item $c-d$ points for the orders differing by one transposition (and $d-c$ for their reversals),
\item $e-f$ points for the orders differing by a $3$-cycle (and $f-e$ for their reversals, which differ by two disjoint transpositions),
\item $g-h$ points for the (unique!) order differing as in Proposal \ref{proposal:possible}, and $h-g$ points for its reversal.
\end{itemize}
\end{itemize}
\end{prop}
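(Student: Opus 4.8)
The plan is to prove the two assertions in turn: that neutrality forces $M$ into the displayed eight-parameter shape, and that this $M$ scales the subspaces of Figure~\ref{figure:co5decomp} as stated. For the shape I would argue exactly as in Proposition~\ref{prop:4matrix}: neutrality says $M_{h,g}=s(g,h)=s(\sigma g,\sigma h)$ for every $\sigma\in S_5$, so the entries of $M$ are constant on the orbits of the diagonal $S_5$-action on $CO_5\times CO_5$, and the number of free parameters is exactly the number of these orbits. I would count them two ways. Abstractly the number of diagonal orbits equals $(\chi,\chi)$; by the given decomposition of $\mathbb{Q}CO_5$ the multiplicities are $1,1,2,1,1$, so $(\chi,\chi)=\sum_\lambda m_\lambda^2=1^2+1^2+2^2+1^2+1^2=8$. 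Concretely, since the action on the first coordinate is transitive I may fix $g=(ABCDE)$, whose stabilizer is the order-$5$ group $\langle(ABCDE)\rangle$; by Corollary~\ref{cor:cocharacterprime} a $5$-cycle fixes exactly $n-1=4$ cyclic orders, so the $24$ orders split into four singleton orbits and four orbits of size $5$, again giving $8$.

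I would then name the orbits. The four fixed orders are $(ABCDE)$ itself, its reversal, the $k=3$ order $(ACEBD)$ of Proposal~\ref{proposal:possible}, and the reversal of that order; these receive the singleton labels $a,b,g,h$. The four size-$5$ orbits are the transposition-neighbours, their reversals, the $3$-cycle-neighbours, and their reversals, receiving $c,d,e,f$. Filling in the first column with these eight values according to the relationship each input order bears to the output $(ABCDE)$, and then propagating across the array by the equivariance $s(\sigma g,\sigma h)=s(g,h)$, reproduces the displayed matrix; in particular each row contains one $a$, one $b$, one $g$, one $h$, and five each of $c,d,e,f$, which is the promised shape.

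For the scalings the key point is that $M$ is a $\mathbb{Q}S_5$-homomorphism and the subspaces of Figure~\ref{figure:co5decomp} are its isotypic components, so by Schur's Lemma $M$ acts as a single scalar on each multiplicity-one piece $S^{(5)}$, $S^{(3,2)}$, $S^{(2,2,1)}$, $S^{(1,1,1,1,1)}$. Each scalar is read off by applying $M$ to one explicit basis vector. The all-ones vector gives the common row sum $a+b+5c+5d+5e+5f+g+h$. The sign vector gives the same weighted sum but with each letter signed by the parity of the corresponding relationship -- even for $a,b,e,f$ and odd for $c,d,g,h$ -- hence $a+b-5c-5d+5e+5f-g-h$; here it is worth noting as a check that the cosets split into twelve even and twelve odd cyclic orders. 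Applying $M$ to $y_1$ and to $z_1$ produces the two five-dimensional scalars directly.

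The repeated irreducible $S^{(3,1,1)^{\oplus^2}}$ needs the only different argument, since Schur's Lemma now permits a $2\times 2$ matrix of scalars rather than a single eigenvalue. Instead of quoting a scalar I would verify the stated image directly on the natural basis vector $(1,-1,0,\dots,0)^T$ attached to a cyclic order $x$, computing $M$ applied to it as the first column of $M$ minus the second and matching each resulting entry to its relationship type: $a-b$ on $x$ with $b-a$ on its reversal, $c-d$ on the transposition-neighbours with $d-c$ on their reversals, $e-f$ on the $3$-cycle-neighbours with $f-e$ on theirs, and $g-h$ on the unique $k=3$ order with $h-g$ on its reversal. I expect the main obstacle to be clerical rather than conceptual: correctly classifying all $23$ non-reference cyclic orders by relationship type and carrying that classification consistently through the full $24\times 24$ array is exactly the bookkeeping the text ascribes to \emph{finding} the matrix. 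Once the orbit labelling is fixed, the remaining steps are the banal matrix multiplications already promised.
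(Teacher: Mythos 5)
Your proposal is correct, and the half of it that verifies the subspace scalings proceeds exactly as the paper does: apply $M$ to one explicit basis vector of each multiplicity-one isotypic component of Figure \ref{figure:co5decomp} and invoke Schur's Lemma, then check the $S^{(3,1,1)^{\oplus 2}}$ claim directly on a vector of the form $(1,-1,0,\dots,0)^T$ by reading off the first column minus the second (the paper describes this step only as ``simply matrix multiplication on each of the subspaces in question''). Where you genuinely depart from the paper is in establishing the eight-parameter \emph{form} of the matrix. The paper offers no argument here beyond saying that finding the matrix ``required ingenuity like in Proposition \ref{prop:4matrix},'' whose $n=4$ proof was an ad hoc chase through particular transpositions. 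You instead observe that neutrality makes $s$ constant on the orbits of the diagonal $S_5$-action on $CO_5\times CO_5$, count those orbits as $(\chi,\chi)=1^2+1^2+2^2+1^2+1^2=8$ from the known multiplicities, and then cross-check by decomposing $CO_5$ into orbits of the order-$5$ stabilizer of $(ABCDE)$ (four singletons, by Corollary \ref{cor:cocharacterprime}, plus four orbits of size $5$), which simultaneously names the parameters $a,b,g,h$ and $c,d,e,f$. This buys a proof that the displayed matrix is not merely \emph{a} neutral rule but the \emph{general} one, with the parameter count predicted in advance rather than discovered; the cost, as you note, is the residual clerical work of classifying all $23$ non-reference orders by orbit type and propagating the labels through the full array, which is exactly the computation the paper performed but did not print. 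One small point worth flagging if you carry out that bookkeeping: in the printed matrix the entry $h$ (not $g$) sits in row $17$, i.e.\ at $(ACEBD)$, the order named in Proposal \ref{proposal:possible}, so the assignment of $g$ versus $h$ within the $k$-apart reversal pair must be read off from the matrix rather than from the Proposal's example.
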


\section{Further work and acknowledgments}\label{section:further}

It should be clear that this paper is just scratching the surface of a larger research program in voting on cyclic orders under various ballots.  In particular, we have only dealt with points-based procedures based on the scoring functions.  While successfully \emph{completely} characterizing all systems with ballot space $CO_n$ seems unlikely, future possible directions include the following.

\begin{itemize}
\item Deciding which, if any, of the ideas in Proposals \ref{proposal:useful} or \ref{proposal:possible} are worthy of further investigation.
\item Hence, an important extension is to analyze voting when we do \emph{not} worry about reversal.  This halves the number of cyclic orders, and one may note that the decompositions already provided in this paper clearly have some subspaces which would remain invariant if reversed orders were identified.  There are evident generalizations of Theorem \ref{thm:cocharacter} which would immediately yield further detail.
\item Unlike some alternate ballots, the ROLO systems are easily extended to $n\geq 4$.  As noted above, the situation of more ballots than cyclic orders is a low-dimensional phenomenon.
\item Finally, there are important graphs that are associated with the various sets of cyclic orders (see also below).  They have additional symmetry which may lead to a finer-grained analysis.  
\end{itemize}

In addition (or perhaps more importantly), using the algebra to come up with axiomatics of the usual voting variety would be a very good goal.  It seems very unlikely that there are not analogues to the Borda or Kemeny rules (for instance, as maximum likelihood estimators).  Alternately to raw axiomatics, using the methods of \cite{BarthelemyMonjardetMedian} or \cite{Zwicker} might be a good place to start.  This would be fairly necessary to address ballots consisting of full or truncated rankings of cyclic orders, analysis of which otherwise would look pretty much like similar ballots in `ordinary' voting theory.

A different direction for further research is the set of cyclic orders\footnote{Also called circular orders, circular permutations, etc.  These objects have often been rediscovered.}.  There is quite a bit of work on cyclic orders considered as ternary relations, not focusing at all on overall structure of the set of cyclic orders.  For example, one may ask whether a partial cyclic order (as a relation) can be completed to a total cyclic order (like the ones under discussion in this paper); this problem is actually NP-complete.  See among many others \cite{Megiddo,Haar,Quilliot} and especially many papers by Nov\'ak, such as \cite{NovakNovotnyCyclic,NovotnyNovakCyclic}.
They also appear in many graph-theoretic contexts, such as graph crossing numbers; the paper \cite{WoodallCyclicOrder} proves some very interesting things about a graph\footnote{For graphs, see also \cite{DeTempleRobertson}, \cite{NahasCrossing}, and \cite{KleitmanCrossing}.} with cyclic orders as vertices.  Nonetheless there is a lot more which could be asked about them\footnote{Cyclic orders are also used in defining so-called ribbon graphs in mathematical physics, but there they simply fulfill the role of a topological datum.}, particularly regarding distances on the set.  We welcome any references we are not aware of for this interesting object.

The authors wish to particularly thank the Gordon College Provost's Summer Undergraduate Research Fellowship for support, and the Gordon College Department of Mathematics and Computer Science, Mike Orrison, and Bill Zwicker for helpful early feedback.  We also wish to thank the organizers of the AMS Special Session on The Mathematics of Decisions, Elections and Games for inviting a talk based on this research, for organizing a volume of papers, and for helpful comments from the editors and a referee.  We finally want to recognize the teams behind Sage Math (\cite{Sage}) and GAP (\cite{GAP4}) for software that greatly eased the exploration of these topics, particularly with respect to computing characters and kernels.

\bibliographystyle{amsplain}
\bibliography{refs}

\end{document}